\newtheorem{theorem}{Theorem}
\newtheorem{constraint}[theorem]{Constraint}
\newtheorem{lemma}{Lemma}
\newtheorem{corollary}{Corollary}
\newtheorem{proposition}{Proposition}
\def\E{\mathsf{E}}
\def\phi{\varphi}
\def\l{\left}
\def\r{\right}
\def\({\left(}
\def\){\right)}
\def\b0{{\mathbf{0}}}
\newcommand{\Pout}{P_{\mathsf{out}}}
\newcommand{\nn}{\nonumber}
\begin{document}
\title{\huge \setlength{\baselineskip}{30pt} Enabling Wireless  Power Transfer in Cellular Networks: Architecture, Modeling  and Deployment}
\author{Kaibin Huang and Vincent K. N. Lau\thanks{\setlength{\baselineskip}{15pt} K. Huang is with the Hong Kong Polytechnic University, Hong Kong and V. K. N.  Lau  is with the Hong Kong University of Science and Technology, Hong Kong.  Email: huangkb@ieee.org, eeknlau@ece.ust.hk.  Updated on \today.  }}

\maketitle

\begin{abstract}
Microwave  power transfer (MPT)  delivers energy wirelessly from stations called \emph{power beacons} (PBs) to mobile devices by  microwave radiation. This provides mobiles  practically infinite battery lives and eliminates the need of power cords and chargers. To enable MPT for mobile charging, this paper proposes a new network architecture that overlays an uplink  cellular network with randomly deployed PBs for powering mobiles, called a \emph{hybrid network}. The deployment of the hybrid network  under an outage constraint on data links is investigated based on a stochastic-geometry model where single-antenna base stations (BSs) and PBs form independent homogeneous  Poisson point processes (PPPs) with densities $\lambda_b$ and $\lambda_p$, respectively, and   single-antenna mobiles are uniformly distributed in Voronoi cells generated by BSs.   In this model, mobiles and PBs fix their transmission power  at $p$ and $q$, respectively; a PB either radiates isotropically, called \emph{isotropic MPT}, or directs energy towards target mobiles by beamforming, called \emph{directed MPT}. The model is applied to derive the tradeoffs between the network parameters $(p, \lambda_b, q, \lambda_p)$ under the outage constraint.  First, consider the deployment of the cellular network. It is proved that the outage constraint is satisfied so  long as the product $p\lambda_b^{\frac{\alpha}{2}}$ is above a given threshold where $\alpha$ is the path-loss exponent. Next, consider the deployment of the hybrid network assuming infinite energy storage at mobiles. It is shown that for isotropic MPT, the product $q\lambda_p \lambda_b^{\frac{\alpha}{2}}$ has to be above a given threshold so that PBs are sufficiently dense; for directed MPT, $z_mq\lambda_p \lambda_b^{\frac{\alpha}{2}}$ with $z_m$ denoting the array gain  should exceed a different threshold to ensure short distances between PBs and their target mobiles. Furthermore, for directed MPT, $(z_mq)^{\frac{2}{\alpha}}\lambda_b$ has to be sufficiently large as otherwise PBs fail to  deliver sufficient power to target mobiles regardless of power-transfer  distances. Last, similar results are derived for the case of mobiles having small energy storage. 
\end{abstract}

\begin{keywords}
Power transmission, cellular networks,  energy harvesting, stochastic processes, adaptive arrays, mobile communication
\end{keywords}

\section{Introduction}

One of the most desirable new features for mobile devices is wireless charging  that eliminates the need of power cords and chargers. To realize this feature, the paper proposes a novel network architecture (see Fig.~\ref{Fig:Network}) where stations called \emph{power beacons} (PBs) are deployed in an existing cellular network for charging  mobiles via microwave radiation known as \emph{microwave power transfer} (MPT). Charging a typical smartphone requires average received microwave power (MP) of tens to hundreds of  milliwatts \cite{TecReview:WirelessPowerCellPhone:2009}. For instance, a conventional smartphone powered by a  battery supplying $1000$ mAh at $3.7$ V and a battery life of $24$ hr can be self-sustaining by WPT if the average transferred  power is above $154$ mW assuming lossless   RF-DC conversion. The challenges in realizing wireless charging by MPT are threefold, namely creating line-of-sight (LOS) links from PBs to mobiles to enable close-to-free-space power transfer, forming sharp energy beams at PBs to counteract propagation loss, and reducing power consumption of mobiles.  These challenges can be tackled by 
the advancements of three technologies, respectively. First, PBs require neither  backhaul links nor complex computation, allowing low-cost and dense deployment.  Specifically, PBs can be installed wherever there are connections to the electrical  grid or even in vehicles and airplanes. Deploying many PBs shortens transfer distances, resulting in LOS power-transfer links. Second, massive antenna arrays with tens to hundreds of elements are a technology currently under active development \cite{Marzetta:CellularUnlimitedBSAntennas:2010} and can be used at PBs for forming sharp energy beams towards mobiles to achieve the WPT efficiency close to one. As discussed in the sequel, gigantic  antenna arrays have been used to achieve an efficiency of $45\%$ for MPT over $36,000$ km  (from a solar-power satellite to the earth). Third, cellular networks are evolving towards the deployment of dense base stations that will substantially reduce the required transmission power of mobiles \cite{ChanAndrews:FemtocellSurvey:2008}. Consequently, PBs' transmission power can be reduced or the MPT distances increased. Furthermore, the advancements in the areas of lower power electronics and displays (e.g., reflective displays) will also decrease  the power consumption of mobile devices and increase  the practicality of MPT.

In this paper,  an uplink cellular  network overlaid  with PBs, called the \emph{hybrid network}, is studied using a stochastic-geometry model where base stations (BSs) and PBs form  independent homogeneous Poisson point processes (PPPs) and 
mobiles are uniformly distributed in  corresponding Voronoi cells with respect to BSs. In the model, PBs are deployed for powering uplink transmissions under an outage constraint by either isotropic radiation or beamforming towards target mobiles, called \emph{isotropic} or \emph{directed} MPT, respectively. The model is used to derive  the tradeoffs between the network parameters including the PB/mobile transmission power and PB/BS densities  for different network configurations accounting for isotropic/directed MPT and mobiles having large/small energy storage. The results provide insight into the hybrid-network deployment and throughput.  

\begin{figure}[t]
\begin{center}
\includegraphics[width=15cm]{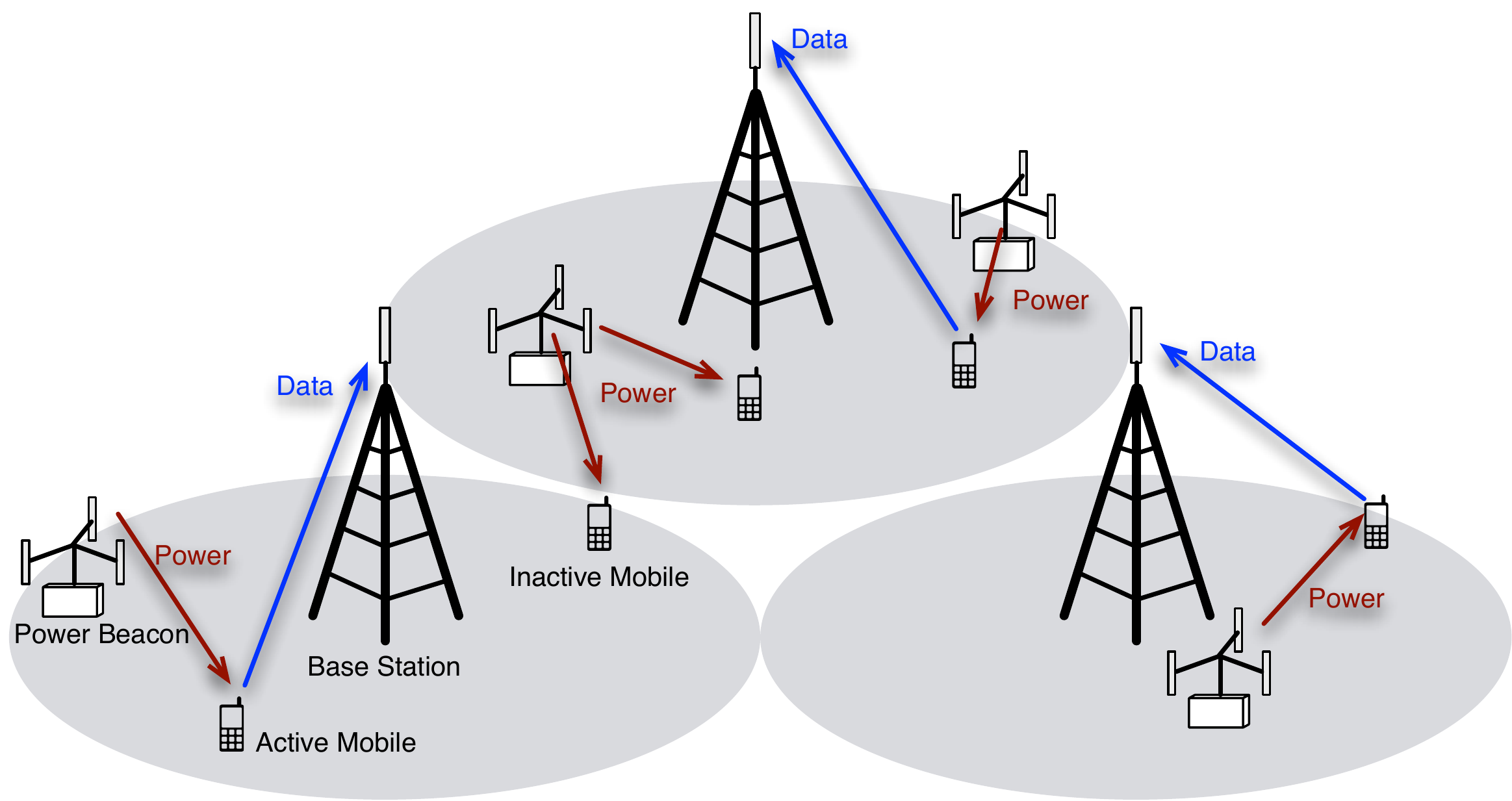}
\end{center}
\caption{Hybrid network that overlays an uplink cellular network with randomly deployed power beacons that wirelessly charge mobiles via microwave radiation.}
\label{Fig:Network}
\end{figure}

\subsection{Microwave Power Transfer} 

Wireless power transfer  is  usually implemented in practice based on one of the three  different technologies, namely  inductive coupling \cite{Want:introductionRFIDTech:2006}, magnetic resonance coupling \cite{Kurs:WPTMagneticResonances:2007},  and MPT \cite{Brown:RadioWPTHistory:1984} corresponding to short range (tens of centi-meters), mid range (several meters), and long range (up to tens of kilometers), respectively. Besides limited transfer distances, both inductive and magnetic-resonance coupling require  alignment and calibration of coils at transmitters and receivers, making these technologies unsuitable for mobile charging. MPT has no such limitations but suffers from potentially severe propagation loss over a long transfer  distance.  

Such loss can be overcome by free-space beaming and intercepting  microwave  energy  using antennas with large apertures. Based on this principle,  efficient point-to-point MPT  over long distances has been demonstrated in large-scale systems. In 1964, W. C. Brown first showed  the practicality  of MPT by demonstrating   a microwave-powered small helicopter hovering  at an altitude of $50$-feet, where a rectenna was used to intercept power beamed from the ground using the frequency of $2.45$ GHz and supply  DC power of $270$ W \cite{Brown:ExperimentMicrowaveBeamHelicopter}. This experiment motivated researchers in Canada and Japan  to develop microwave-powered airplanes in $1980$s \cite{Schlesak:MicrowavePoweredHiAltitudePlatform:1998} and  $1990$s \cite{Matsumoto:SolarPowerSatJapan:2002}, respectively. An extreme MPT system  was designed for the National Aeronautics and Space Administration (NASA) of the United States for transferring  power of about $2.7$ GW  from a solar-power satellite to a station on the surface of the earth \cite{Mcspadden:SpaceSolarPowerMicrowaveWPT:2002}. The power-transfer efficiency is predicted to be $45\%$ over a transfer distance as long as $36,000$ km by using gigantic transmit and receive antenna arrays with diameters of $500$ m and $7.5$ km, respectively. 
Large antennas with the dimensions  in the aforementioned experiments are impractical and unnecessary for  MPT in the hybrid network where efficient MPT relies  on deploying dense PBs to shorten power-transfer distances. 

Recently, simultaneous transfer of  information and power has emerged to be a new research area, which traditionally was  treated as two separate problems \cite{Varshney:TransportInformationEnergy:2008, GroverSahai:ShannonTeslaWlssInfoPowerTransfer,
ZhouZhang:WlessInfoPowrTransfer:RateEnergy:2012, Zhang:MIMOBCWirelessInfoPowerTransfer}. In \cite{Varshney:TransportInformationEnergy:2008}, the author investigated the information capacity of a point-to-point wireless channel under the constraint that the average received  power  is above a given threshold, establishing the relation between the transferred power and information rate over the same channel. For the channel in an inductive-coupling circuit, the efficiency   of power/information transfer  increases/decreases with the input-signal bandwidth. Note that the power-transfer efficiency is the highest at the resonance frequency of the circuit and decreases as the transfer frequency deviates from the resonance point.  This results in a  tradeoff between the transferred power and information capacity that is characterized in \cite{GroverSahai:ShannonTeslaWlssInfoPowerTransfer}. A two-user broadcast channel is considered in \cite{Zhang:MIMOBCWirelessInfoPowerTransfer} where a multi-antenna base station transmits information to one multi-antenna terminal and energy to the other. A tradeoff between the power and information rate delivered over the broadcast channel is achieved   by precoding at the base station. Different from prior work, this paper considers coupled wireless power and information transfer in a large-scale  network and the resultant  relation between the hybrid-network parameters  can be interpreted as the network counterparts of the existing  capacity-and-power tradeoffs for small-scale systems.

\subsection{Modeling the Access and Power-Beacon Networks}
Cellular networks traditionally are modeled using hexagonal grids but such models lack tractability and fail to account for ad hoc BS deployment that is a trend of network evolution. These drawbacks can be overcome by using stochastic geometry for modeling the cellular-network architecture \cite{FossZuyev:VoronoiProcessPoisson:1996, Baccelli:StochGeometryArchitectCommNetwork:2006, Andrews:TractableApproachCoverageCellular:2010, NovlanAndrews:ModelUplinkCellular:2012}. In \cite{FossZuyev:VoronoiProcessPoisson:1996}, a cellular network was modeled as a bivariate PPP that comprises two independent PPPs representing BSs and mobiles, respectively. As a result,  the network architecture is  distributed as a Poisson Voronoi tessellation. This random network model was applied in  \cite{Baccelli:StochGeometryArchitectCommNetwork:2006} to study the economics of cellular networks, and combined with a channel-fading model  to study the  downlink coverage  in \cite{Andrews:TractableApproachCoverageCellular:2010} and uplink coverage in \cite{NovlanAndrews:ModelUplinkCellular:2012}, yielding analytical  results consistent with the performance of practical networks. An uplink network is considered in this paper where single-antenna mobiles transmit data to single-antenna BSs under  an outage-probability constraint for a target received signal-to-interference-and-noise ratio (SINR).  Aligned with existing approaches,  BSs in the uplink network is modeled as a homogeneous PPP with density $\lambda_b$ and each cell of the resultant Voronoi tessellation comprises a uniformly distributed active mobile. Data links have path loss with the exponent $\alpha > 2$  but  no fading.  

Mobiles  in the current cellular-network model are powered by MPT rather than reliable power supplies assumed in prior work.  Randomly deployed PBs are modeled as a homogeneous PPP similar to the existing models of 
wireless ad hoc networks (see e.g., \cite{HaenggiAndrews:StochasticGeometryRandomGraphWirelessNetworks}). 
MPT  uses frequencies outside the data bandwidth (e.g., in the ISM band) and hence causes no interference to uplink transmissions.  Given fixed PB transmission power, the MPT bandwidth is required to be  sufficiently large so that the power-spectrum density meets regulations on  microwave radiation. 
Like data links, MPT links have no fading and  their path-loss exponent  $\beta > 2$ need not be equal to $\alpha$ due to the difference in  transmission ranges and frequencies. Each PB radiates energy either isotropically or along the directions of targeted mobiles by beamforming, referred to as \emph{isotropic} and \emph{directed} MPT, respectively. Mobiles intercept energy transferred from PBs continuously and store it for powering subsequent uplink transmission.

\subsection{Contributions and Organization}

The main contributions of   the  paper are summarized as follows. Let $\{c_n\}$ denote a set of constants to be derived in the sequel. 
\begin{enumerate}
\item First, consider the deployment of the cellular network. Define the \emph{feasibility region} $\mathcal{F}_c$ as all feasible combinations of the network parameters $(p, \lambda_b)$ under the outage constraint. It is proved that 
\begin{equation}\label{Intro:Feasibility:a}
\mathcal{F}_c = \l\{(p, \lambda_b) \in \mathds{R}_+^2\mid  p\lambda_b^{\frac{\alpha}{2}}\geq c_1\r\}
\end{equation}
where $\mathds{R}_+$ denotes the set of nonnegative numbers. It is observed that  the minimum $p$ increases with decreasing $\lambda_b$ and vice versa. 
Consider the special case of an interference-limited network with  noise omitted and under an additional constraint that requires received signal power at each BS to be above a threshold for given probability. 
It is shown that outage-probability is independent with $p$ and $\lambda_b$, making  the outage constraint  irrelevant.  The corresponding feasibility region $\mathcal{F}_c$ is found to have a similar form as in \eqref{Intro:Feasibility:a}:
\begin{equation}\nn
\mathcal{F}_c = \l\{(p, \lambda_b) \in \mathds{R}_+^2\mid  p\lambda_b^{\frac{\alpha}{2}}\geq c_2\r\}.
\end{equation}

\item Next, consider the deployment of the hybrid network. Define the  feasibility region $\mathcal{F}_h$ for the hybrid network  as all  combinations of $(q, \lambda_p, \lambda_b)$ that satisfy the outage constraint constraint for the cellular network. Note that  $p$ is determined by $(q, \lambda_p)$. Assume that mobiles have  large  energy storage. For isotropic MPT, it is proved that 
\begin{equation}
\mathcal{F}_h = \l\{(q, \lambda_p, \lambda_b)\in\mathds{R}^3_+\mid q\lambda_p\lambda_b^{\frac{\alpha}{2}} \geq c_3\r\}. 
\end{equation}
For directed MPT, an inner bound on $\mathcal{F}_h$ is obtained that reduces to the following expression if  $\frac{z_mq}{\sigma^2}$ is sufficiently large:
\begin{equation}\label{Intro:Feasibility:b}
 \l\{(\lambda_p, \lambda_b)\in\mathds{R}^2\mid z_mq\lambda_p\lambda_b^{\frac{\alpha}{2}} \geq c_4, \l(z_m q\r)^{\frac{2}{\alpha}} \lambda_b \geq c_5\r\}\subset\mathcal{F}_h 
\end{equation}
where  $z_m$ denotes the array gain, the first condition for the inner bound ensures that PBs are sufficiently dense, and the second condition corresponds to that the maximum power transferred from a PB to an intended mobile is sufficiently high.  The results quantify the tradeoffs between $(q, \lambda_p, \lambda_b)$ and show that the gain of MPT beamforming is equivalent to increasing $q$ by a factor of $z_m$. 

\item Assume that mobiles have small energy storage. The potentially random mobile-transmission power  can be stabilized by enforcing the constraint that the received raw power at each mobile exceeds a given threshold with high probability. Under this and the outage constraint, inner bounds on $\mathcal{F}_h$ are obtained for isotropic MPT as 
\begin{equation} \label{Intro:Feasibility:c}
 \l\{(q, \lambda_p, \lambda_b)\in\mathds{R}^3_+\mid q^{\frac{2}{\beta}}\lambda_p\lambda_b^{\frac{\alpha}{\beta}} \geq c_6,  q^{\frac{2}{\alpha}}\lambda_b \geq c_7\r\}\subset\mathcal{F}_h 
\end{equation}
and for directed MPT as 
\begin{equation}\label{Intro:Feasibility:d}
 \l\{(q, \lambda_p, \lambda_b)\in\mathds{R}^3_+\mid \l(z_mq\r)^{\frac{2}{\beta}}\lambda_p\lambda_b^{\frac{\alpha}{\beta}} \geq c_8, \l(z_mq\r)^{\frac{2}{\alpha}}\lambda_b \geq c_9\r\}\subset\mathcal{F}_h. 
\end{equation}
Similar remarks as on the results in \eqref{Intro:Feasibility:b} also apply to those in \eqref{Intro:Feasibility:c} and \eqref{Intro:Feasibility:d}. Moreover, compared with their counterparts  for mobiles with large  energy storage, the current tradeoffs between $\lambda_p$ and $\lambda_b$ depend on  the path-loss exponents for both data and MPT links. In particular, with other parameters fixed, $\lambda_p$ decreases inversely with increasing $\lambda_p$ if $\alpha = \beta$.  
\end{enumerate}

The remainder of the paper is organized as follows. The models, metrics  and constraints are described in Section~\ref{Section:Model}. The feasibility regions for the cellular and hybrid networks are analyzed in Section~\ref{Section:CellularNet} and \ref{Section:HybridNet}, respectively. Numerical and simulation results are presented in Section~\ref{Section:Sim} followed by concluding remarks in Section~\ref{Section:Conclusion}. 

{\bf Notation:} The notation is summarized in Table~\ref{Table:Notation}. 

\begin{table}[t!]
\caption{Summary of notation}
\begin{center}
\begin{tabular}{cp{10cm}}
\toprule
{\bf Symbol}  &   {\bf Meaning}\\
\midrule
$\Phi$, $\lambda_b$ & PPP of base stations, density of $\Phi$\\

$\Psi$, $\lambda_p$ & PPP of PBs, density of $\Psi$\\

$Y_0, U_0, T_0$ & Typical base station, mobile and PB\\

$U_Y$ &Mobile served by base station $Y$\\

$\alpha$, $\beta$ & Path-loss exponent for a data channel, a MPT channel\\

$\Pout$, $\epsilon$ & Outage probability and its maximum for the typical base station\\

$p, q$ &Transmission power for  mobiles, PBs\\

$\theta$ & Outage threshold \\

$\mathcal{C}$ & Set of base stations serving mobiles whose interference to $Y_0$ is canceled    \\

$p_b$ & Target received signal power at a BS in an interference-limited network  \\

$\eta$ & Maximum probability that received signal power at a BS is below $p_b$  in an interference-limited network  \\
$p_t$, $\delta$ & Required received raw power at a mobile and the maximum power-outage probability  for the case of mobiles having small energy storage  \\
$\mathds{R}_+$ & Set of nonnegative numbers\\
$K$ & Number of interfering mobiles each BS cancels\\
\bottomrule
\end{tabular}
\end{center}
\label{Table:Notation}
\end{table}

\begin{figure}[t]
\begin{center}
\includegraphics[width=11cm]{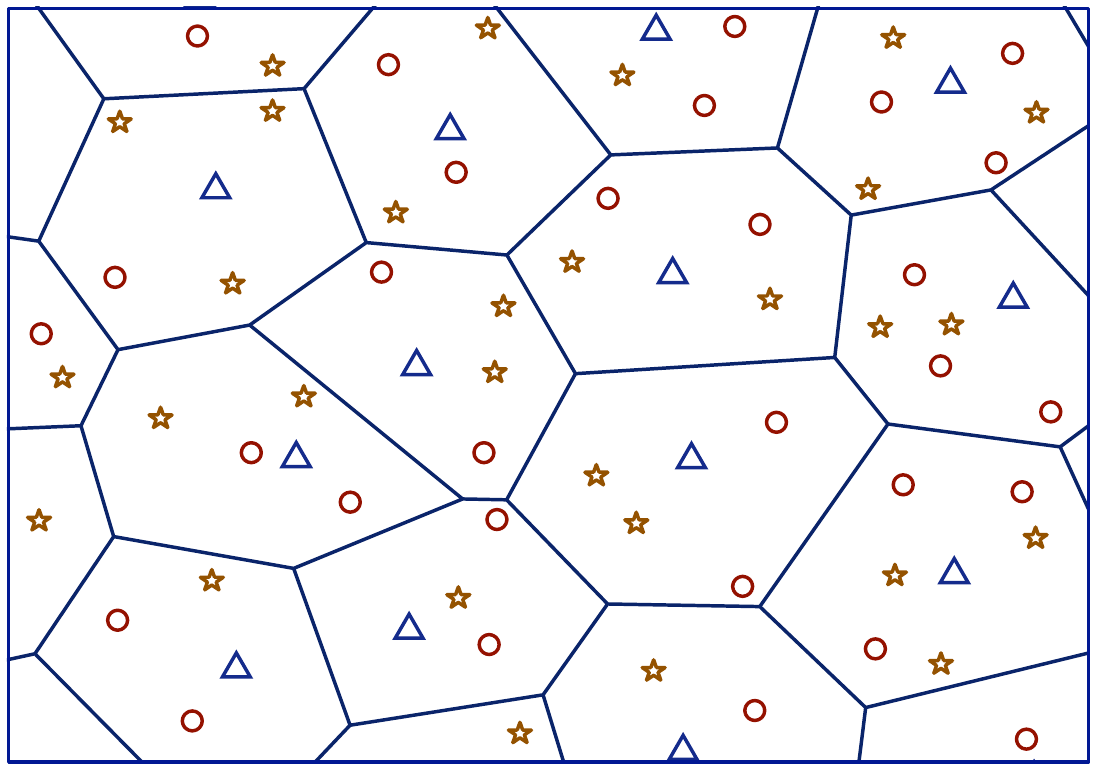}
\end{center}
\caption{Hybrid-network model where BSs, PBs and mobiles are marked with triangles, circles and stars, respectively. The Voronois cells with respect to  BSs  are plotted with solid  lines.  }
\label{Fig:NetworkModel}
\end{figure}

\section{Models, Metrics and Constraints}\label{Section:Model}

\subsection{Access-Network Model} In the uplink network, BSs are modeled as a homogeneous  PPP $\Phi = \{Y\}$ with density $\lambda_b$, where $Y\in\mathds{R}^2$ represents the coordinates of the corresponding BS. Given that mobiles are associated with their nearest BSs, the Euclidean plane is partition into Voronoi cells with points in $\Phi$ being the nuclei as illustrated in Fig.~\ref{Fig:NetworkModel}. Mobiles in the same cell are independent and uniformly distributed, and time share the corresponding BS. All active mobiles transmit signals with fixed power $p$. By applying Slyvnyak's Theorem \cite{StoyanBook:StochasticGeometry:95}, 
a typical base station, denoted as $Y_0$, is assumed to be located at the origin without loss of generality.  The active mobile served by $Y_0$ is called the typical active mobile and denoted as $U_0$. It is assumed that $Y_0$ uses a multiuser detector  \cite{VerBook} to perfectly cancel interference from $K$ nearest interfering mobiles. For ease of notation, let $U_Y$ denote the active mobile served by BS $Y$ and $\mathcal{C}$ represent the set of BSs serving $K$ nearest interfering mobiles for $Y_0$. Channels are characterized by path loss based on the long-range propagation model such that signals transmitted by a mobile $U$  with power $a$  is  received by BS $Y$ with power  $a|U - Y|^{-\alpha}$  where $\alpha > 2$ is the path-loss exponent. It follows that the signal power received at $Y_0$  is $p|U_0|^{-\alpha}$ and the interference power is given as
\begin{equation}\label{Eq:I}
I = \sum_{Y\in \Phi\backslash\{Y_0, \mathcal{C}\}} p|U_Y|^{-\alpha}. 
\end{equation}
Last, time is slotted and all channels and point processes including BSs, mobiles and PBs (discussed in the sequel)  are assumed fixed within one slot and independent over different slots.

\subsection{Power-Beacon Network Model} PBs are modeled as a homogeneous  PPP, denoted as $\Psi$, with density $\lambda_p$ and independent with the BS and mobile processes. Each mobile deploys a MP receiver with a dedicated antenna as shown in Fig.~\ref{Fig:Harvester} to intercept  microwave  energy  transmitted by PBs. Inspired by the design in \cite{Zhang:OptimalSaveThenTransmitEnergyHarvesting}, the power receiver in Fig.~\ref{Fig:Harvester} deploys two energy-storage units to enable continuous MPT. When a mobile  is active,  one unit is used  for MPT and the other for powering the transmitter; their roles are switched when the latter unit is fully discharged. When the transmitter is inactive, the MP receiver remains active till all units are fully charged. 

\begin{figure}[t]
\begin{center}
\includegraphics[width=11cm]{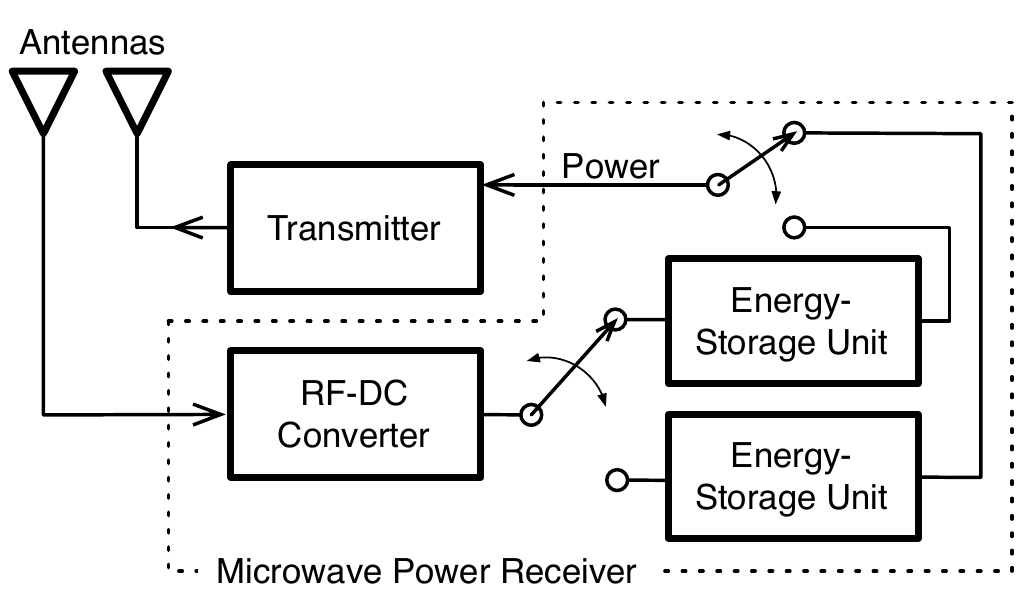}
\end{center}
\caption{Microwave-power receiver  that uses  two energy-storage  units to enable simultaneous energy harvesting and powering a transmitter.}
\label{Fig:Harvester}
\end{figure}

Consider isotropic MPT. A short-range propagation model \cite{Baccelli:AlohaProtocolMultihopMANET:2006} is used that avoids singularity caused by proximity between PBs and mobiles and thereby ensures finite average raw (unmodulated)  power received by a mobile. To be specific, 
the raw power received at the typical active mobile $U_0$ is given as
\begin{equation}\label{Eq:RXPwr:NoBeam}
P = q \sum_{T\in \Psi} \l[ \max(|T - U_0|, \nu)\r]^{-\beta}
\end{equation}
where $\nu >  1$ is a constant, the path-loss exponent $\beta > 2$ need not be equal to $\alpha$ for data links,  and the received signal power from other mobiles is neglected as it is much smaller than raw power from PBs.

Next, consider directed MPT. Each  mobile is constrained to be charged by only the nearest PB using beamforming. To simplify analysis, the beamforming responses  are assumed to have only two fixed  levels, namely the main-lobe and side-lobe levels. A PB can serve multiple mobiles simultaneously where the total transmitted raw power is multiplied to ensure the reliability of MPT. Specifically, the transmission power of $T \in \Phi$ is $M_Tq$ where $M_T$ is the random number of mobiles for which $T$ is the nearest PB.  Based on these assumptions,  the raw power received by $U_0$ can be written as 
\begin{equation}\label{Eq:RXPwr:Beam:a}
P =  z_mq \max\l(|U_0 - T_0|, \nu\r)^{-\beta} +  z_sq \sum_{T\in\Psi \backslash \{T_0\}}\max(|U_0 - T|, \nu)^{-\beta}\nn
\end{equation}
where the constants  $z_m, z_s > 0$   represent the beam main-lobe and side-lobe responses, respectively.  In contrast with the isotropic counterpart, directed MPT requires handshaking between PBs and mobiles e.g.,  using the following protocol. By estimating the relative distances and identifications of nearby PBs from their continuously broadcast signals, a mobile sends a charging request and pilot symbols to the nearest beacon that then estimate the mobile direction and beams raw power in this direction. 

\subsection{Metrics and Constraints}

The cellular-network performance is measured by the outage probability that also gives the fraction of mobiles outside network coverage \cite{Andrews:TractableApproachCoverageCellular:2010}. Define the outage probability for the typical BS $Y_0$ as 
\begin{equation}\nn
\Pout = \Pr\l(\frac{p |U_0|^{-\alpha}}{I+ \sigma^2} < \theta\r)
\end{equation}
where  $I$ is  given in \eqref{Eq:I}, $\sigma^2$ denotes   the channel-noise variance, and $\theta$ is the target SINR.  
\begin{constraint} [Cellular network] \emph{The cellular-network deployment satisfies  the outage constraint: $\Pout \leq \epsilon$ with $0 < \epsilon < 1$.}
\end{constraint}
The special case of an interference-limited network is also considered.  To this end, besides the outage constraint, the following one  is applied to ensure sufficiently high received signal power at BSs for the cellular network to operate in the interference-limited regime. 

\begin{constraint}[Interference-limited cellular  network]\emph{The received signal power at a BS is required to exceed a threshold $p_b$ with probability larger than $(1-\eta)$ with $p_b > 0$ and $0 < \eta < 1$. }
\end{constraint}

Mobiles store energy in either large storage unites (e.g., rechargeable batteries) or small ones (e.g., super capacitors). Large storage at a mobile removes the randomness of received raw power and provides  fixed output power (see Lemma~\ref{Lem:TXProb}). However, given small storage, the transmission power of a mobile  may fluctuate due to   the random locations of PBs. The fluctuation is  suppressed by ensuring sufficiently large and stable  instantaneous power transferred to each mobile as follows.

\begin{constraint} [Power beacons] \emph{Given mobiles having \emph{small energy storage}, the PB deployment satisfies the constraint that the received raw power at each  mobile is below    a given threshold $p_t > 0$ (power outage) with probability no larger than a constant $\delta$ with $0 < \delta <  1$.}
\end{constraint}
The probability $\delta$ is chosen to be sufficiently small such that a power-shortage event  at an active mobile can be coped with by drawing  energy  buffered in the small storage units to ensure uninterrupted transmission with power $p$.

\section{Access-Network Deployment} \label{Section:CellularNet}

In this section, the requirements for the cellular-network parameters $(p, \lambda_b)$ are investigated by deriving  the feasibility region $\mathcal{F}_c$. It is found that $\mathcal{F}_c$ for the case of interference-limited networks is independent with the outage constraint and is determined only by the constraint on the BS received signal power (see Constraint $3$). 

\subsection{Access-Network Deployment with Nonzero Noise}

To characterize  the feasibility region,  a useful result is obtained as shown in Lemma~\ref{Lem:Dilute} that is proved in Appendix~\ref{App:Dilute} using Mapping Theorem \cite{Kingman93:PoissonProc}. 
\begin{lemma}\label{Lem:Dilute}\emph{Consider the BS process $\Phi$ as a function of the density $\lambda_b$ denoted as $\Phi(\lambda_b)$. The two random variables  $\sum_{Y\in \Phi(\lambda_b)} h_Y|U_Y|^{-\alpha}$ and $\sum_{Y\in \Phi(a \lambda_b)} a^{-\frac{\alpha}{2}}h_Y|U_Y|^{-\alpha}$ follow an identical distribution, where $a > 0$ and $h_Y\in\mathds{R}$ depends only on $Y$.  
}
\end{lemma}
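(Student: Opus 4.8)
The plan is to exploit the scaling invariance of a homogeneous PPP together with the equivariance of the construction ``nearest-BS Voronoi cell, then a uniform mobile in it'' under dilations; the Mapping Theorem supplies the first ingredient. I would introduce the dilation $f_a:\mathds{R}^2\to\mathds{R}^2$, $f_a(x)=a^{-1/2}x$, a measurable bijection whose inverse $x\mapsto\sqrt{a}\,x$ expands Lebesgue measure by the factor $a$. Since $\Phi(\lambda_b)$ has intensity measure $\lambda_b$ times Lebesgue, the Mapping Theorem gives that $f_a(\Phi(\lambda_b))$ is again a homogeneous PPP, now of density $a\lambda_b$, hence equal in law to $\Phi(a\lambda_b)$.

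Next I would upgrade this to a statement about the whole marked Voronoi network. Realize $\Phi(\lambda_b)$; to each point $Y$ attach its mark $h_Y$ (which by hypothesis is determined without reference to the mobile positions) and, conditionally on $\Phi(\lambda_b)$ and independently across cells, a mobile $U_Y$ drawn uniformly on the a.s.\ bounded Voronoi cell $V_Y$. Pushing the entire configuration through $f_a$ sends $Y\mapsto f_a(Y)$, carries $h_Y$ along with its point, and sends $U_Y\mapsto f_a(U_Y)$. Two elementary facts identify the image with a genuine sample of the $a\lambda_b$-network: (i) the Voronoi tessellation commutes with dilations, $V_{f_a(Y)}\big(f_a(\Phi)\big)=f_a\big(V_Y(\Phi)\big)$, because $f_a$ preserves the order of all pairwise distances; and (ii) a linear bijection maps the uniform law on a Borel set of finite positive measure to the uniform law on its image. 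Combined with the Mapping Theorem step and the fact that the marks are transported bijectively with the points and do not involve the mobiles, the triple $\big(f_a(\Phi(\lambda_b)),\{h_Y\},\{f_a(U_Y)\}\big)$ has the same joint distribution as $\big(\Phi(a\lambda_b),\{h_{Y'}\},\{U_{Y'}\}\big)$.

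Finally I would evaluate the statistic on the transported configuration: for each $Y$, $|f_a(U_Y)|^{-\alpha}=|a^{-1/2}U_Y|^{-\alpha}=a^{\alpha/2}|U_Y|^{-\alpha}$, so
\[
\sum_{Y'\in\Phi(a\lambda_b)} h_{Y'}\,|U_{Y'}|^{-\alpha}\ \stackrel{d}{=}\ a^{\alpha/2}\sum_{Y\in\Phi(\lambda_b)} h_Y\,|U_Y|^{-\alpha},
\]
and multiplying both sides by $a^{-\alpha/2}$ yields exactly the asserted equality in distribution; the two series are finite a.s.\ for $\alpha>2$, and since this is inherited identically on both sides no separate convergence argument is needed. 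The main obstacle is the middle step: making rigorous that the pushforward under $f_a$ of the point process, its marks, and the conditionally-uniform mobiles is again distributed as the $a\lambda_b$-network — that is, the joint equivariance of the Voronoi construction and the uniform-mobile draw, together with the precise role of the hypothesis that $h_Y$ is a function of $Y$ alone. The Mapping Theorem step and the closing algebra are routine.
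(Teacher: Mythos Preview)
Your proposal is correct and follows essentially the same route as the paper: apply the Mapping Theorem to the dilation $x\mapsto a^{\pm 1/2}x$ to identify $\Phi(\lambda_b)$ with $\Phi(a\lambda_b)$ in law, and then use the scale-equivariance of the Voronoi/uniform-mobile construction so that the mobile positions scale with the BSs, yielding the factor $a^{\alpha/2}$. The only cosmetic difference is that the paper encodes this equivariance by writing each mobile as a random linear combination $U_n=\sum_{Y}C_{n,Y}Y$ of BS points (the coefficients being scale-invariant), whereas you argue it directly via the commutation of Voronoi tessellations with dilations and the preservation of uniform laws under linear bijections; your formulation is arguably the cleaner of the two.
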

Furthermore, define $\mu > 0$ such that 
\begin{equation}\label{Eq:Mu:Def}
\Pr\l(\sum_{Y\in \Lambda(1)} \l[|U_Y|^{-\alpha} 1\{Y \neq Y_0\cup\mathcal{C}\}-\theta^{-1} |U_0|^{-\alpha}1\{Y = Y_0\}\r]+ \mu > 0\r)=\epsilon. 
\end{equation}
The main result of this section is shown  in the following proposition. 

\begin{proposition} \label{Prop:Feasibility}\emph{Under Constraint~$1$, the feasibility region for the cellular network  is given as 
\begin{equation}\label{Eq:SetF:Def}
\mathcal{F}_c = \l\{(\lambda_b, p) \in \mathds{R}_+^2\mid p \lambda_b^{\frac{\alpha}{2}} \geq \frac{\sigma^2}{\mu}\r\} 
\end{equation}
with $\mu$ defined in \eqref{Eq:Mu:Def}. 
}
\end{proposition}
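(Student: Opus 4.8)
\emph{Proof plan.} The plan is to turn the outage event into a one-sided event for a Poisson shot-noise sum that is reducible in the BS density, apply Lemma~\ref{Lem:Dilute} to normalize $\lambda_b$ to one, and then compare with the defining relation \eqref{Eq:Mu:Def} for $\mu$. I would start by rewriting $\Pout$. Since $I+\sigma^2>0$, the outage event $p|U_0|^{-\alpha}<\theta(I+\sigma^2)$ is equivalent to $\theta I+\theta\sigma^2-p|U_0|^{-\alpha}>0$; substituting $I=p\sum_{Y\in\Phi(\lambda_b)\setminus\{Y_0,\mathcal{C}\}}|U_Y|^{-\alpha}$ from \eqref{Eq:I} and dividing by the positive constant $p\theta$ yields
\[
\Pout=\Pr\!\l(\sum_{Y\in\Phi(\lambda_b)}\l[|U_Y|^{-\alpha}1\{Y\neq Y_0\cup\mathcal{C}\}-\theta^{-1}|U_0|^{-\alpha}1\{Y=Y_0\}\r]+\frac{\sigma^2}{p}>0\r),
\]
where I used $U_{Y_0}=U_0$ and the fact that $Y_0\in\{Y_0\}\cup\mathcal{C}$, so the signal term merges into the sum at $Y=Y_0$ while the terms for $Y\in\mathcal{C}$ vanish.

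Next I would invoke Lemma~\ref{Lem:Dilute} with the weights $h_Y=1\{Y\neq Y_0\cup\mathcal{C}\}-\theta^{-1}1\{Y=Y_0\}$. These are scale-invariant functionals of the marked BS process: a homothety centred at the origin fixes $Y_0$ and preserves the ordering of the mobile-to-origin distances, hence preserves $\mathcal{C}$, so the lemma applies. Taking the scaling factor $a=1/\lambda_b$ gives $\sum_{Y\in\Phi(\lambda_b)}h_Y|U_Y|^{-\alpha}\stackrel{d}{=}\lambda_b^{\alpha/2}\sum_{Y\in\Phi(1)}h_Y|U_Y|^{-\alpha}$, and dividing the event above by $\lambda_b^{\alpha/2}>0$ produces
\[
\Pout=\Pr\!\l(S+\frac{\sigma^2}{p\,\lambda_b^{\alpha/2}}>0\r),\qquad S:=\sum_{Y\in\Phi(1)}\l[|U_Y|^{-\alpha}1\{Y\neq Y_0\cup\mathcal{C}\}-\theta^{-1}|U_0|^{-\alpha}1\{Y=Y_0\}\r].
\]
Comparing with \eqref{Eq:Mu:Def}, $\mu$ is exactly the value of the additive term that makes $\Pr(S+\,\cdot\,>0)$ equal to $\epsilon$.

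It remains to invert this relation. Set $g(t):=\Pr(S+t>0)$, which is non-decreasing in $t$. Because the distances $|U_Y|$ are continuously distributed, the shot-noise sum $S$ has a continuous distribution (no atoms) with support unbounded below and above, so $g$ is continuous and strictly increasing through every level in $(0,1)$; hence \eqref{Eq:Mu:Def} has a unique solution $\mu$ and $g(t)\le\epsilon\iff t\le\mu$. Applying this with $t=\sigma^2/(p\lambda_b^{\alpha/2})$ gives $\Pout\le\epsilon\iff p\lambda_b^{\alpha/2}\ge\sigma^2/\mu$, which is precisely the description \eqref{Eq:SetF:Def} of $\mathcal{F}_c$ (the boundary cases $p=0$ or $\lambda_b=0$ are infeasible and correctly excluded). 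The density reduction via Lemma~\ref{Lem:Dilute} and the algebra are routine; the point requiring care is this last step --- establishing that $\mu$ in \eqref{Eq:Mu:Def} is a well-defined \emph{positive} number, the positivity amounting to $\Pr(S>0)<\epsilon$, i.e.\ that the outage target is not already violated in the noiseless (equivalently, dense-BS) limit, so that $\mathcal{F}_c$ is nonempty.
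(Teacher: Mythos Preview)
Your proof is correct and follows essentially the same route as the paper: rewrite the outage event as a one-sided inequality for the signed shot-noise sum plus $\sigma^2/p$, apply Lemma~\ref{Lem:Dilute} to reduce to unit BS density with the additive term becoming $\sigma^2/(p\lambda_b^{\alpha/2})$, and then read off the threshold from the definition \eqref{Eq:Mu:Def} of $\mu$. Your treatment is slightly more careful than the paper's in two places --- you make explicit why the indicator weights $h_Y$ are scale-invariant so that Lemma~\ref{Lem:Dilute} applies, and you spell out the monotonicity and continuity of $t\mapsto\Pr(S+t>0)$ needed to invert \eqref{Eq:Mu:Def} --- but the argument is otherwise identical.
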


\begin{proof} Consider the outage probability $\Pout(\cdot, \cdot)$ as a function of $(\lambda_b, p)$. 
To facilitate analysis, the outage probability can be rewritten as 
\begin{equation}\label{Eq:Pout:Def}
\Pout(\lambda_b, p) = \Pr\l(\sum_{Y\in \Phi(\lambda_b)} \l[|U_Y|^{-\alpha} 1\{Y \neq Y_0\cup \mathcal{C}\}-\theta^{-1} |U_0|^{-\alpha}1\{Y = Y_0\}\r]+ \frac{\sigma^2}{p} > 0\r). 
\end{equation}
Using \eqref{Eq:Pout:Def} and Lemma~\ref{Lem:Dilute}, it can be obtained that 
\begin{equation}\label{Eq:Pout:Noise}
\Pout(\lambda_b, p) = \Pr\l(\sum_{Y\in \Phi(1)} \l[|U_Y|^{-\alpha} 1\{Y \neq Y_0\cup \mathcal{C}\}-\theta^{-1} |U_0|^{-\alpha}1\{Y = Y_0\}\r]+ \frac{\sigma^2}{p\lambda_b^{\frac{\alpha}{2}}} > 0\r). 
\end{equation}
Since the summation in \eqref{Eq:Pout:Noise} is a continuous random variable, there exists $\mu > 0$ as defined in \eqref{Eq:Mu:Def} such that $\Pout(p, \lambda_b)\leq \epsilon $ for all $\sigma^2/(p\lambda_b^\alpha)\leq \mu$. The desired result follows. 
\end{proof}
Two remarks are in order.
\begin{enumerate}
\item The result in Proposition~\ref{Prop:Pout:IntLim} is consistent with the intuition that deploying denser BSs shortens transmission distances and hence requires lower transmission power at mobiles. In particular, doubling $\lambda_b$ decreases $p$  by a factor of $2^{\frac{\alpha}{2}}$, which is more significant for more severe propagation attention (larger $\alpha$). 

\item It is challenging to derive a closed-form expression for $\mu$ defined in \eqref{Eq:Mu:Def}  and its relation with $\epsilon$ is evaluated via  simulation in the sequel.  However, some simple properties of the relation  can be inferred including that a) $\epsilon(\mu)$  is a strictly monotone decreasing function of $\mu$, b) $\epsilon(0)$  corresponds to  interference limited networks and c) $\lim_{\mu\rightarrow\infty}\epsilon(\mu) = 1$.

\end{enumerate}

\subsection{Access-Network Deployment with Zero  Noise}

Consider an interference-limited cellular network where  noise is negligible.  

\begin{proposition}\label{Prop:Pout:IntLim}\emph{For an interference-limited cellular network with zero noise, $\Pout$ is independent with the BS density $\lambda_b$ and mobile-transmission power $p$. }
\end{proposition}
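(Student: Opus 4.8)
The plan is to read off the proposition as an immediate consequence of setting $\sigma^{2}=0$ in the two representations of $\Pout$ already in hand. First I would start from
\[
\Pout = \Pr\left(\frac{p\,|U_0|^{-\alpha}}{I+\sigma^{2}} < \theta\right), \qquad I = \sum_{Y\in\Phi\backslash\{Y_0,\mathcal{C}\}} p\,|U_Y|^{-\alpha},
\]
and put $\sigma^{2}=0$. The power $p$ then appears as a common factor of the numerator $p|U_0|^{-\alpha}$ and of every interference term $p|U_Y|^{-\alpha}$, so it cancels in the SIR, leaving
\[
\Pout = \Pr\left(|U_0|^{-\alpha} < \theta\sum_{Y\in\Phi\backslash\{Y_0,\mathcal{C}\}} |U_Y|^{-\alpha}\right),
\]
an expression with no occurrence of $p$. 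This proves independence of $p$.

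For independence of $\lambda_b$ I would reuse \eqref{Eq:Pout:Noise}, the rewriting obtained in the proof of Proposition~\ref{Prop:Feasibility} via the dilation identity of Lemma~\ref{Lem:Dilute}: with $\sigma^{2}=0$ it reads
\[
\Pout(\lambda_b,p) = \Pr\left(\sum_{Y\in\Phi(1)}\left[|U_Y|^{-\alpha}1\{Y\neq Y_0\cup\mathcal{C}\} - \theta^{-1}|U_0|^{-\alpha}1\{Y=Y_0\}\right] > 0\right).
\]
The right-hand side is a probability defined purely in terms of the fixed unit-density process $\Phi(1)$ --- both $\lambda_b$ and $p$ had entered only through the deleted noise term $\sigma^{2}/(p\lambda_b^{\alpha/2})$ --- hence it is a constant independent of $\lambda_b$ (and of $p$). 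Equivalently, Lemma~\ref{Lem:Dilute} scales the whole bracketed sum by the positive factor $\lambda_b^{\alpha/2}$, which does not change the probability of the event $\{\,\cdot\,>0\}$; this also recovers the $p$-independence in one line.

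There is no real obstacle here: once $\sigma^{2}=0$, the statement is a corollary of work already carried out for Proposition~\ref{Prop:Feasibility}, and the only thing to be careful about is to invoke the correct scaled representation \eqref{Eq:Pout:Noise} rather than re-deriving it --- in particular to note that the dilation of Lemma~\ref{Lem:Dilute} acts uniformly on $|U_0|^{-\alpha}$ and on every $|U_Y|^{-\alpha}$ and is compatible with the nearest-BS associations and with the excision of $\{Y_0\}$ and $\mathcal{C}$, so that the common factor $\lambda_b^{\alpha/2}$ indeed pulls out of the sum. I would close with the one-line remark that this is consistent with the observation following Proposition~\ref{Prop:Feasibility} that $\epsilon(0)$ is precisely the interference-limited case.
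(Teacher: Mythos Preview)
Your proposal is correct and follows essentially the same approach as the paper: set $\sigma^{2}=0$, observe that $p$ cancels from the SIR, and use the dilation identity of Lemma~\ref{Lem:Dilute} to remove the dependence on $\lambda_b$. The only cosmetic difference is that the paper re-applies Lemma~\ref{Lem:Dilute} from scratch to show $\Pout(a\lambda_b)=\Pout(\lambda_b)$, whereas you invoke the already-derived representation \eqref{Eq:Pout:Noise} and note that the right-hand side depends only on $\Phi(1)$ once the noise term is gone; these are the same argument packaged two ways.
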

\begin{proof}
By substituting $\sigma^2=0$ into \eqref{Eq:Pout:Def}, the outage probability can be written as 
\begin{equation}\label{Eq:Pout:IntLim}
\Pout = \Pr\l(\sum_{Y\in \Phi} \l[|U_Y|^{-\alpha} 1\{Y \neq Y_0\cup \mathcal{C}\}-\theta^{-1} |U_0|^{-\alpha}1\{Y = Y_0\}\r] > 0\r). 
\end{equation}
Let  $\Pout$ and the BS process $\Phi$ be functions of $\lambda_b$, dented as $\Pout(\lambda_b)$ and $\Phi(\lambda_b)$, respectively. Using \eqref{Eq:Pout:IntLim} and given $a > 0$, $\Pout(a \lambda_b)$ can be written as
\begin{align}
\Pout(a\lambda_b) &= \Pr\l(\sum\nolimits_{Y\in \Phi(a\lambda_b)} \l[|U_Y|^{-\alpha} 1\{Y \neq Y_0\cup \mathcal{C}\}-\theta^{-1} |U_0|^{-\alpha}1\{Y = Y_0\}\r] > 0\r)\nn\\
&= \Pr\l(\sum\nolimits_{Y\in \Phi(\lambda_b)} \l[|U_Y|^{-\alpha} 1\{Y \neq Y_0\cup \mathcal{C}\}-\theta^{-1} |U_0|^{-\alpha}1\{Y = Y_0\}\r] > 0\r)\label{Eq:Pout:IntLim:a}\\
& = \Pout(\lambda_b)\label{Eq:Pout:IntLim:b}
\end{align}
where \eqref{Eq:Pout:IntLim:a} applies Lemma~\ref{Lem:Dilute} and \eqref{Eq:Pout:IntLim:b} follows from \eqref{Eq:Pout:IntLim}. The proposition statement  is a direct result of \eqref{Eq:Pout:IntLim:b}, completing the proof.
\end{proof}

A few remarks are made. 

\begin{enumerate}
\item The dual  of  the result in Proposition~\ref{Prop:Pout:IntLim}   for downlink networks  with fading and without interference cancelation is shown in \cite{Andrews:TractableApproachCoverageCellular:2010} using the method of Laplace transform. These results are the consequence of the fact that in both uplink and downlink interference-limited networks, transmission power has no effect on the signal-to-interference ratio (SIR), and increasing the BS density shortens the distances of data and interference links by the same factor and hence also has  no effect  on the SIR. Therefore, such results hold regardless of if fading is present and interference cancelation is applied. 

\item For interference-limited networks, the outage probability is largely affected  by the outage threshold $\theta$ and the pass-loss exponent $\alpha$ that determines the level of spatial separation. 

\item Though outage probability is independent with  $\lambda_b$ in an interference-limited network, the  network throughput grows linearly with increasing $\lambda_b$ since denser active  mobiles can be supported. 

\end{enumerate}

The  result in Proposition~\ref{Prop:Pout:IntLim} implies that Constraint~$1$ has no effect on the feasibility region for an interference-limited network. To derive the region under Constraint~$2$, the  distribution function of the received signal  power at the typical BS has no closed-form but can be upper bounded as 
\begin{equation}\label{Eq:RXPower:CDF}
\Pr\l(p|U_0|^{-\alpha} \leq p_t \r)\leq \Pr\l(p\l(\min_{Y\in\Phi\backslash\{Y_0\}} \frac{|Y|}{2}\r)^{-\alpha}\leq t\r)
\end{equation}
that results from the assignment of mobiles to the nearest BSs. Note that $\Phi\backslash\{Y_0\}$ follows the same distribution as $\Phi$ based on Slyvnyak's Theorem \cite{StoyanBook:StochasticGeometry:95}. The distribution of the distance between two nearest points in a PPP is well known \cite{Haenggi:DistUniformRandomNetwk:2005}, yielding that 
\begin{equation}\label{Eq:NearestPoints:CCDF}
\Pr\l(\min_{Y\in\Phi\backslash\{Y_0\}} |Y|\geq r\r) = e^{-\pi\lambda_b r^2}, \qquad r \geq 0. 
\end{equation}
Combining \eqref{Eq:RXPower:CDF} and \eqref{Eq:NearestPoints:CCDF} and applying Constraint~$2$ lead to the following proposition. 
\begin{proposition}\label{Prop:RXPower:Outage} \emph{Consider the deployment of an interference-limited cellular network under Constraint~$1$ and $2$. The feasibility region is independent  with Constraint~$1$ and inner bounded as 
\begin{equation}
\l\{(p, \lambda_b)\in\mathds{R}_+^2\mid p\lambda_b^{\frac{\alpha}{2}}\geq \frac{1}{\tilde{\mu}}\r\} \subset \mathcal{F}_c 
\end{equation}
with $\tilde{\mu} = \frac{1}{p_b} \l(\frac{2\pi}{\log\frac{1}{\eta}}\r)^{\frac{\alpha}{2}}$.
}
\end{proposition}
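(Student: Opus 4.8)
The plan is to prove the stated inclusion by showing that every pair $(p,\lambda_b)$ with $p\lambda_b^{\frac{\alpha}{2}}\ge 1/\tilde{\mu}$ satisfies both Constraint~$1$ and Constraint~$2$, and hence lies in $\mathcal{F}_c$. The first of these is handled at once by Proposition~\ref{Prop:Pout:IntLim}: with zero noise $\Pout$ does not depend on $p$ or $\lambda_b$, so the outage constraint $\Pout\le\epsilon$ holds either for all pairs or for none; taking the network to be designed so that this common value does not exceed $\epsilon$, Constraint~$1$ imposes no relation between $p$ and $\lambda_b$, which is exactly what is meant by ``the feasibility region is independent with Constraint~$1$.'' Consequently the inner bound is dictated solely by Constraint~$2$, i.e. by the requirement $\Pr\big(p|U_0|^{-\alpha}\le p_b\big)\le\eta$.

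Next I would bound $\Pr\big(p|U_0|^{-\alpha}\le p_b\big)$ using the two facts already recorded, \eqref{Eq:RXPower:CDF} and \eqref{Eq:NearestPoints:CCDF}. The geometric input is that, since every mobile is served by its nearest BS, the typical mobile $U_0$ lies in the Voronoi cell of $Y_0$, so its distance $|U_0|$ to $Y_0$ is dominated by a fixed multiple of $R:=\min_{Y\in\Phi\setminus\{Y_0\}}|Y|$, the distance from $Y_0$ to the nearest competing BS; this turns the event ``the received power $p|U_0|^{-\alpha}$ is below $p_b$'' into ``$R$ exceeds a threshold,'' which is the content of \eqref{Eq:RXPower:CDF}. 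I would then substitute the contact (nearest-neighbour) distance law of a PPP from \eqref{Eq:NearestPoints:CCDF}, applicable because $\Phi\setminus\{Y_0\}$ is again a homogeneous PPP of density $\lambda_b$ by Slivnyak's theorem, to obtain a closed-form bound of the shape $\exp\big(-c\,\pi\lambda_b (p/p_b)^{2/\alpha}\big)$ for $\Pr\big(p|U_0|^{-\alpha}\le p_b\big)$, where $c$ is the constant produced by the geometric step.

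Finally I would enforce this bound $\le\eta$, take logarithms to get $c\,\pi\lambda_b(p/p_b)^{2/\alpha}\ge\log(1/\eta)$, rearrange, and raise both sides to the power $\frac{\alpha}{2}$; this yields precisely an inequality of the form $p\lambda_b^{\frac{\alpha}{2}}\ge 1/\tilde{\mu}$ with $\tilde{\mu}=\frac{1}{p_b}\big(\tfrac{c\pi}{\log(1/\eta)}\big)^{\alpha/2}$, and matching this against the stated $\tilde{\mu}$ pins down the constant coming out of \eqref{Eq:RXPower:CDF}. Since such a $(p,\lambda_b)$ then obeys Constraint~$1$ and Constraint~$2$, it belongs to $\mathcal{F}_c$, which is the claimed inclusion.

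I expect the only genuine obstacle to be the domination step behind \eqref{Eq:RXPower:CDF}: one must argue carefully that assignment-to-the-nearest-BS forces $|U_0|$ to be at most a fixed fraction times $R$ (equivalently, that the relevant part of $Y_0$'s cell sits inside a ball of radius proportional to $R$), and one must keep precise track of that multiplicative constant, since it alone determines $\tilde{\mu}$. Everything downstream --- invoking Slivnyak, plugging in the Rayleigh-type CDF, and inverting the scalar inequality --- is routine. It should also be stressed that the conclusion is only an \emph{inner} bound on $\mathcal{F}_c$: \eqref{Eq:RXPower:CDF} is a one-sided estimate, so the true region admissible under Constraint~$2$ may be strictly larger.
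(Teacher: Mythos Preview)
Your proposal is correct and follows essentially the same route as the paper: invoke Proposition~\ref{Prop:Pout:IntLim} to render Constraint~$1$ vacuous, then combine the geometric domination \eqref{Eq:RXPower:CDF} with the nearest-neighbour law \eqref{Eq:NearestPoints:CCDF} and impose Constraint~$2$ to extract the threshold on $p\lambda_b^{\alpha/2}$. You have also correctly isolated the only nontrivial step, namely the inclusion behind \eqref{Eq:RXPower:CDF} and the bookkeeping of its constant; indeed, carrying the factor $|Y|/2$ from \eqref{Eq:RXPower:CDF} through \eqref{Eq:NearestPoints:CCDF} literally yields $c=4$ rather than the $c=2$ implicit in the stated $\tilde{\mu}$, so your caution about tracking that constant is well placed.
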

By comparing Proposition~ \ref{Prop:Feasibility} and \ref{Prop:RXPower:Outage}, the conditions  on $p$ and $\lambda_b$ for both the cases of interference-limited and nonzero-noise networks have the same form, namely $p \lambda^{\frac{\alpha}{2}} \geq c$ with $c$ being a constant, and differ only in  $c$. The similarity is a consequence of the fact that increasing $p$ and $\lambda_b$ affects only  the received signal power but not the  SIR (see Remark~$1$ on Proposition~\ref{Prop:Pout:IntLim}). 

\section{Hybrid-Network  Deployment}\label{Section:HybridNet}
In this section, the hybrid-network deployment is analyzed  by deriving the feasibility region $\mathcal{F}_h$ for different PB-network configurations combining isotropic/directed MPT and large/small energy storage at mobiles. 

\subsection{Hybrid-Network Deployment: Mobiles with Large  Energy Storage} 
In this section, it is assumed that energy-storage units (see Fig.~\ref{Fig:Harvester}) at mobiles have infinite capacity.  Large storage provides active mobiles reliable transmission power as specified in the following lemma directly following from  \cite[Theorem~1]{Huang:WirelessAdHocNetworkEnergyHarvesting} that studies energy harvesting in wireless ad hoc networks. 

\begin{lemma}[\cite{Huang:WirelessAdHocNetworkEnergyHarvesting}]\label{Lem:TXProb}\emph{For mobiles with infinite energy storage and powered by MPT, the probability that a mobile can transmit signals with power $p$  is given as
\begin{equation}
\Pr(P \geq p) = \l\{
\begin{aligned}
&1, && \E[P] \geq \omega p  \\
& \frac{\E[P]}{\omega p}, && \text{otherwise}
\end{aligned}
\r.
\end{equation}
where $\omega$ is the \emph{duty cycle} defined as the probability that a mobile transmits a packet in a slot and $P$ is the received raw power in \eqref{Eq:RXPwr:NoBeam} and \eqref{Eq:RXPwr:Beam} for isotropic and directed MPT, respectively.
}
\end{lemma}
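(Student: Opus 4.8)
The plan is to read the storage unit as an infinite-capacity energy buffer fed by the i.i.d.\ stream of harvested powers and drained by transmissions, and to promote the heuristic ``an infinite buffer lets a mobile spend at its average income rate'' into the stated formula by a drift dichotomy together with an ergodic rate-conservation argument. Since the excerpt already quotes the lemma from \cite[Theorem~1]{Huang:WirelessAdHocNetworkEnergyHarvesting}, the ``proof'' in the paper is a citation; what follows is a sketch of why the statement holds and how one would establish it from scratch.

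First I would fix the setup and record the heuristic. As all point processes are fixed within a slot but independent across slots, the raw power $P_n$ harvested by the typical mobile in slot $n$ (see \eqref{Eq:RXPwr:NoBeam}) is i.i.d.\ with mean $\E[P]$, and independently the mobile has a packet in a slot with probability $\omega$, drawing $p$ units of energy (unit slot length) whenever it actually transmits. Over the long run an infinite buffer can sustain an average expenditure of at most $\E[P]$ per slot, i.e.\ a transmit power of at most $\E[P]/\omega$ across the fraction $\omega$ of active slots; hence one expects $\Pr(P\ge p)=1$ when $p\le\E[P]/\omega$ and $\Pr(P\ge p)=(\E[P]/\omega)/p$ otherwise, which is the claim. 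To make this precise, let $B_n$ denote the buffer level at the start of slot $n$ and $G_n$ the event that slot $n$ carries a packet, so that $B_{n+1}=B_n+P_n-p\,1\{G_n\}\,1\{B_n+P_n\ge p\}\ge 0$ and the quantity written $\Pr(P\ge p)$ is the long-run fraction of packet slots for which $B_n+P_n\ge p$.

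Next comes the dichotomy on the per-slot drift $\E[P]-p\Pr(\text{transmit}\mid B_n)$, which always lies in $[\E[P]-\omega p,\; \E[P]]$. If $\E[P]\ge\omega p$ the drift is nonnegative and, capacity being unbounded, $B_n$ grows without limit: for $\E[P]>\omega p$ this follows from the strong law once $B_n$ is large enough that a packet is always served, while for the boundary value $\E[P]=\omega p$ the chain is null recurrent so the asymptotic frequency of visits to any bounded set is zero; either way the asymptotic fraction of failed attempts vanishes and $\Pr(P\ge p)=1$. If $\E[P]<\omega p$ the drift is strictly negative for large $B_n$, so the buffer chain is positive recurrent and ergodic, whence $B_n/N\to 0$ a.s.; telescoping the recursion then gives $\frac1N\sum_{n\le N}P_n-\frac{p}{N}\sum_{n\le N}1\{\text{success in slot }n\}\to 0$, so by the strong law the long-run rate of successful transmissions equals $\E[P]/p$, and dividing by $\omega$ (a success entails a packet) yields the conditional probability $\E[P]/(\omega p)$.

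The hard part is the ergodic/stationarity bookkeeping rather than any computation: showing the energy queue is positive recurrent with an integrable stationary buffer level, so that ``$\Pr$'' is a genuine stationary probability and the limit in the previous step is legitimate, and disposing of the knife-edge $\E[P]=\omega p$ cleanly, where no stationary law exists and the claim must be read as a long-run frequency. All of this is carried out in \cite[Theorem~1]{Huang:WirelessAdHocNetworkEnergyHarvesting} for the wireless ad hoc network model, and since the harvesting mechanism for a typical mobile here is of the same form, I would invoke that theorem directly.
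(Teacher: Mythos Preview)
Your reading is correct: the paper does not prove this lemma but simply cites \cite[Theorem~1]{Huang:WirelessAdHocNetworkEnergyHarvesting}, and you identify this explicitly. The drift-dichotomy and rate-conservation sketch you supply is a sound reconstruction of the argument behind that theorem and goes beyond what the paper itself provides, so there is nothing to correct.
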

Lemma~\ref{Lem:TXProb} suggests that an active  mobile can transmit signals continuously with power up to $\E[P]/\omega$.  
Closed-form expressions for $\E[P]$ are derived  by applying Campbell's Theorem \cite{Kingman93:PoissonProc}. The results are shown in Lemma~\ref{Lem:HarPower}.

\begin{lemma}\label{Lem:HarPower}\emph{The  average received raw power for a mobile is given as follows:
\begin{enumerate}
\item for isotropic MPT, 
\begin{equation}\label{Eq:HarPower:NoBeam}
\E[P] = \frac{ \pi\beta \nu^{2-\beta} q \lambda_p}{\beta -2 };
\end{equation}
\item for directed MPT, 
\begin{equation}\label{Eq:HarPower:Beam}
\E[P] = z_mq\psi + z_sq\l(\frac{ \pi\beta  \nu^{2-\beta}  \lambda_p}{\beta -2 }-\psi\r)
\end{equation}
where 
\begin{equation}
\psi = \nu^{-\beta} \l(1 - e^{-\pi \lambda_p \nu^2}\r) + (\pi\lambda_p)^{\frac{\beta}{2}}\gamma\l(\pi\lambda_p \nu^2, 1-\frac{\beta}{2}\r).  \nn
\end{equation} 
\end{enumerate}
}
\end{lemma}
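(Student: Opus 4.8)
The plan is to handle the two parts separately, in each case reducing the expectation to integrals against the intensity measure of the PB process $\Psi$ via Campbell's Theorem \cite{Kingman93:PoissonProc}, together with the known nearest-neighbour distance law for the one non-Campbell term arising in the directed case. Throughout, since $\Psi$ is stationary and independent of the mobile process, I condition on $U_0$ lying at the origin, which does not affect any of the expectations below.

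For isotropic MPT, Campbell's Theorem applied to \eqref{Eq:RXPwr:NoBeam} gives $\E[P]=q\lambda_p\int_{\mathds{R}^2}[\max(|x|,\nu)]^{-\beta}\,dx$. Writing the integral in polar form and splitting the radial part at $r=\nu$ produces $2\pi q\lambda_p\bigl(\nu^{-\beta}\int_0^\nu r\,dr+\int_\nu^\infty r^{1-\beta}\,dr\bigr)$; both integrals are elementary (the tail converges because $\beta>2$), evaluating to $\nu^{2-\beta}/2$ and $\nu^{2-\beta}/(\beta-2)$ respectively. Adding them and using $\tfrac12+\tfrac1{\beta-2}=\tfrac{\beta}{2(\beta-2)}$ yields \eqref{Eq:HarPower:NoBeam}.

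For directed MPT, I split the expectation of the received power (cf.\ \eqref{Eq:RXPwr:Beam:a}) by linearity into a main-lobe term $z_mq\,\E[\max(|U_0-T_0|,\nu)^{-\beta}]$ and a side-lobe term $z_sq\,\E[\sum_{T\in\Psi\setminus\{T_0\}}\max(|U_0-T|,\nu)^{-\beta}]$, and set $\psi:=\E[\max(|U_0-T_0|,\nu)^{-\beta}]$. Since $T_0\in\Psi$ almost surely, the side-lobe sum equals the full sum over $\Psi$ minus the nearest-PB term, so its expectation is $\pi\beta\nu^{2-\beta}\lambda_p/(\beta-2)-\psi$ by the isotropic computation (now without the factor $q$); this already gives the form \eqref{Eq:HarPower:Beam}, reducing everything to evaluating $\psi$. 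For that, recall the nearest-PB distance $R_0$ has $\Pr(R_0>r)=e^{-\pi\lambda_p r^2}$ and hence density $2\pi\lambda_p r e^{-\pi\lambda_p r^2}$ (cf.\ \eqref{Eq:NearestPoints:CCDF}, \cite{Haenggi:DistUniformRandomNetwk:2005}). Splitting $\E[\max(R_0,\nu)^{-\beta}]$ at $r=\nu$ gives $\nu^{-\beta}(1-e^{-\pi\lambda_p\nu^2})$ from the event $\{R_0\le\nu\}$ and $2\pi\lambda_p\int_\nu^\infty r^{1-\beta}e^{-\pi\lambda_p r^2}\,dr$ from $\{R_0>\nu\}$; the change of variable $u=\pi\lambda_p r^2$ turns the latter into $(\pi\lambda_p)^{\beta/2}\int_{\pi\lambda_p\nu^2}^\infty u^{-\beta/2}e^{-u}\,du$, which is exactly the incomplete gamma function written as $\gamma(\pi\lambda_p\nu^2,1-\tfrac{\beta}{2})$ in the statement, giving the claimed $\psi$.

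There is no serious obstacle here; the computation is routine. The only points requiring a word of justification are that the ``full sum minus nearest term'' decomposition of the side-lobe sum is legitimate (which holds because $T_0$ is an atom of $\Psi$ almost surely and the expectations are finite), and that the gamma-type integral for $\psi$ converges despite its order $1-\beta/2$ being negative --- it does, because the lower limit $\pi\lambda_p\nu^2$ is strictly positive and the integrand decays exponentially at infinity.
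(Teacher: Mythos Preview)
Your proof is correct and follows essentially the same route as the paper: Campbell's Theorem in polar coordinates for the isotropic case, and for the directed case the nearest-neighbour distance law to evaluate $\psi$ together with the isotropic result for the full sum. The only cosmetic difference is that the paper first rewrites the directed received power as $q(z_m-z_s)\max(|U_0-T_0|,\nu)^{-\beta}+qz_s\sum_{T\in\Psi}\max(|U_0-T|,\nu)^{-\beta}$ before taking expectations, whereas you keep the main- and side-lobe terms separate and subtract $\psi$ from the full-sum expectation afterwards; the two decompositions are algebraically identical.
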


The proof of Lemma~\ref{Lem:HarPower} is provided in Appendix~\ref{Proof:HarPower}. For a sanity check, substituting $z_m = z_s = 1$ into \eqref{Eq:HarPower:Beam} gives the result for isotropic MPT in \eqref{Eq:HarPower:NoBeam}. Two remarks are given  as follows. 

\begin{enumerate}
\item For isotropic MPT,  a mobile is powered by all nearby PBs and thus the expected received raw power $\E[P]$ given in \eqref{Eq:HarPower:NoBeam}  is proportional to the PB density $\lambda_p$. The value of $\E[P]$ is large if the path-loss exponent $\beta$ for MPT links is closed to $2$, which corresponds to  the scenario where there are LOS links between a mobile and many PBs.  In addition, $\E[P]$ is also proportional to the PB-transmission power $q$ and duty cycle $\omega$. 

\item For directed MPT, there is only one PB transferring significant raw power to each mobile. Consequently, the first term at the right-hand side of \eqref{Eq:HarPower:Beam} is dominant given that $z_m \gg z_s$ and hence $\E[P] \approx z_mq\psi$ where $\psi$ quantifies the propagation loss. Furthermore, if $\lambda_p \nu^2\ll 1$, $\psi\approx \pi\lambda_p \nu^{2-\beta}$ and hence $\E[P] \approx \pi z_mq\lambda_p \nu^{2-\beta}$ that is equal to the counterpart for isotropic MPT in \eqref{Eq:HarPower:NoBeam} with the factor $\frac{\beta}{\beta - 2}$ replaced with $z_m$. It is important to note that these two factors represent respectively the advantages  of isotropic and directed MPT, namely that strong raw power at a mobile comes  form a large number of PBs for the case of isotropic MPT with low propagation loss ($\beta$ is close to $2$) and from sharp beamforming for the case of directed MPT (large $z_m$). 
\end{enumerate}

The main result of this section is shown in Proposition~\ref{Prop:Feasibility:LargeBatt}, which follows from combining  Proposition~\ref{Prop:Feasibility}, Lemma~\ref{Lem:TXProb} and \ref{Lem:HarPower}, and the fact that the last term in \eqref{Eq:HarPower:Beam} is positive (see the proof of Lemma~\ref{Lem:HarPower}). 
 
\begin{proposition}\label{Prop:Feasibility:LargeBatt}\emph{Given   mobiles with infinite  energy storage, the feasibility region $\mathcal{F}_h$ for the hybrid network deployed under the outage constraint (Constraint~$1$)  is given as follows. 
\begin{enumerate}
\item For isotropic MPT, 
\begin{equation}\label{Eq:PBDen:NoBeam}
\mathcal{F}_h = \l\{(q, \lambda_b, \lambda_p)\in \mathds{R}^3_+\mid q\lambda_p \lambda_b^{\frac{\alpha}{2}} \geq \l(1-\frac{2}{\beta}\r)\frac{\sigma^2\omega \nu^{\beta-2}}{\pi\mu}\r\}.  
\end{equation}
\item For directed MPT, $\mathcal{F}_n$ is inner bounded as 
\begin{equation}\label{Eq:PBDen:Beam}
 \l\{(q, \lambda_b, \lambda_p)\in \mathds{R}_+^3\mid \lambda_p  \geq \frac{1}{\pi\nu^2}\log\frac{z_m q\lambda_b^{\frac{\alpha}{2}}}{z_m q\lambda_b^{\frac{\alpha}{2}}-\kappa}, \l(z_m q\r)^{\frac{2}{\alpha}}\lambda_b \geq \kappa^{\frac{2}{\alpha}}\r\}\subset \mathcal{F}_h 
\end{equation}
with $\kappa = \frac{\omega \sigma^2\nu^\beta}{\mu}$. 
\end{enumerate}
}
\end{proposition}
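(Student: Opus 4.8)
The plan is to reduce the hybrid-network feasibility question to the cellular-network region $\mathcal{F}_c$ of Proposition~\ref{Prop:Feasibility} by eliminating the free parameter $p$. With infinite storage, Lemma~\ref{Lem:TXProb} says a mobile transmits continuously with power $p$ exactly when $\E[P] \geq \omega p$, so the largest sustainable power is $p = \E[P]/\omega$; since the condition $p\lambda_b^{\frac{\alpha}{2}} \geq \sigma^2/\mu$ defining $\mathcal{F}_c$ in \eqref{Eq:SetF:Def} is monotone in $p$, fixing $p$ to this maximal value is the right (and, per the model, the prescribed) choice. Hence a triple $(q,\lambda_b,\lambda_p)$ belongs to $\mathcal{F}_h$ if and only if $\E[P]\,\lambda_b^{\frac{\alpha}{2}} \geq \omega\sigma^2/\mu$, where $\E[P]$ is supplied by Lemma~\ref{Lem:HarPower}. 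Everything after this is substitution and rearrangement.

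For isotropic MPT the reduction is an equivalence, so I would simply insert \eqref{Eq:HarPower:NoBeam} into $\E[P]\lambda_b^{\frac{\alpha}{2}} \geq \omega\sigma^2/\mu$ and collect the constants $\pi$, $\beta/(\beta-2)$, $\nu^{2-\beta}$, using $(\beta-2)/\beta = 1-\tfrac{2}{\beta}$, to land on $q\lambda_p\lambda_b^{\frac{\alpha}{2}} \geq (1-\tfrac{2}{\beta})\sigma^2\omega\nu^{\beta-2}/(\pi\mu)$, which is \eqref{Eq:PBDen:NoBeam}; because the reduction was an iff this is an exact description of $\mathcal{F}_h$, not merely an inner bound.

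For directed MPT I would only seek a sufficient condition, giving the claimed inner bound. From \eqref{Eq:HarPower:Beam} and the proof of Lemma~\ref{Lem:HarPower}, the side-lobe term $z_sq\big(\tfrac{\pi\beta\nu^{2-\beta}\lambda_p}{\beta-2}-\psi\big)$ is nonnegative (it is $z_sq$ times an expectation of a nonnegative quantity), and $\psi = \nu^{-\beta}(1-e^{-\pi\lambda_p\nu^2}) + (\text{nonnegative incomplete-gamma term})$; hence $\E[P] \geq z_mq\,\nu^{-\beta}(1-e^{-\pi\lambda_p\nu^2})$. Thus feasibility holds whenever $z_mq\,\nu^{-\beta}(1-e^{-\pi\lambda_p\nu^2})\,\lambda_b^{\frac{\alpha}{2}} \geq \omega\sigma^2/\mu$, i.e. $e^{-\pi\lambda_p\nu^2} \leq 1-\kappa/(z_mq\lambda_b^{\frac{\alpha}{2}})$ with $\kappa = \omega\sigma^2\nu^\beta/\mu$. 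For the right side to be positive — which is necessary for any finite $\lambda_p$ — one needs $z_mq\lambda_b^{\frac{\alpha}{2}} > \kappa$, equivalently $(z_mq)^{\frac{2}{\alpha}}\lambda_b \geq \kappa^{\frac{2}{\alpha}}$ (the boundary is vacuous, since there the log condition demands $\lambda_p=\infty$), and under it taking logarithms gives $\lambda_p \geq \tfrac{1}{\pi\nu^2}\log\tfrac{z_mq\lambda_b^{\alpha/2}}{z_mq\lambda_b^{\alpha/2}-\kappa}$. These are exactly the two inequalities of \eqref{Eq:PBDen:Beam}, and since they imply membership in $\mathcal{F}_h$ they establish the inner bound.

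The only step needing genuine care, rather than bookkeeping, is the directed-MPT estimate: one must cite from the appendix proof of Lemma~\ref{Lem:HarPower} that both discarded pieces — the side-lobe contribution and the incomplete-gamma part of $\psi$ — are nonnegative, and one must track the domain on which the logarithm is defined; it is precisely the requirement $z_mq\lambda_b^{\frac{\alpha}{2}} > \kappa$ that produces the second condition $(z_mq)^{\frac{2}{\alpha}}\lambda_b \geq \kappa^{\frac{2}{\alpha}}$. One should also flag that, unlike the isotropic case, the directed result is only an inner bound because positive terms were dropped.
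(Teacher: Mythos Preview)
Your proposal is correct and follows essentially the same route as the paper, which simply states that the result follows from combining Proposition~\ref{Prop:Feasibility}, Lemma~\ref{Lem:TXProb}, Lemma~\ref{Lem:HarPower}, and the positivity of the last term in \eqref{Eq:HarPower:Beam}. In fact you have filled in a detail the paper leaves implicit: to reach the form of \eqref{Eq:PBDen:Beam} one must drop not only the side-lobe term but also the incomplete-gamma piece of $\psi$, and you correctly justify both by nonnegativity.
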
 

Two remarks are in order. 

\begin{enumerate}
\item For isotropic MPT, the required PB density decreases with the increasing PB transmission power $q$ and BS density  $\lambda_b$ as shown in \eqref{Eq:PBDen:NoBeam}. The result also shows the effects of the path-loss exponents $\alpha$ and $\beta$ for data and MPT links, respectively. Specifically, the required PB density diminishes  as $\beta$ approaches  $2$ corresponding to fee-space propagation and increases with growing $\alpha$ for which data-link path loss is more severe and higher transmission power at mobiles is needed.  In addition, for both isotropic and directed MPT, the required PB density is proportional to the mobile duty cycle $\omega$.

\item Consider the inner bound on $\mathcal{F}_h$ in \eqref{Eq:PBDen:Beam}. The first condition ensures that PBs and their served mobiles are   sufficiently close (within a distance of $\nu$) with high probability. Under the second condition, a single PB is able to provide sufficient  power to an intended and nearby mobile by either sharp beamforming (large $z_m$) or/and high transmission power (large $q$). 
\end{enumerate}

If the ratio $z_m q/\sigma^2$ is large, the inner bound on $\mathcal{F}_h$ can be simplified as shown in the following corollary. 

\begin{corollary}\label{Cor:Feasibility:LargeBatt}\emph{Given   mobiles with infinite  energy storage and powered by directed MPT, as $\frac{z_m q}{\sigma^2}\rightarrow\infty$,  the feasibility region $\mathcal{F}_h$ for the hybrid network deployed under the outage constraint (Constraint~$1$)  can be inner bounded as 
\begin{equation}
 \l\{(q, \lambda_b, \lambda_p)\in \mathds{R}_+^3\mid z_m q\lambda_p\lambda_b^{\frac{\alpha}{2}}\geq \frac{\kappa}{\pi\nu^2} + o(1), \l(z_m q\r)^{\frac{2}{\alpha}}\lambda_b \geq \kappa^{\frac{2}{\alpha}}\r\}\subset \mathcal{F}_h.  
\end{equation}
}
\end{corollary}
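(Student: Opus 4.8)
The plan is to obtain the simplified bound directly from the inner bound in Proposition~\ref{Prop:Feasibility:LargeBatt}(2), i.e.\ equation \eqref{Eq:PBDen:Beam}, by replacing its transcendental first condition with an algebraic sufficient condition and then showing that the resulting slack vanishes as $\frac{z_m q}{\sigma^2}\to\infty$. Throughout I would write $x = z_m q\lambda_b^{\frac{\alpha}{2}}$ and recall $\kappa = \frac{\omega\sigma^2\nu^{\beta}}{\mu}$, so that the two conditions in \eqref{Eq:PBDen:Beam} read $\lambda_p \geq \frac{1}{\pi\nu^2}\log\frac{x}{x-\kappa}$ and $x\geq\kappa$ (the latter being equivalent to $(z_m q)^{\frac{2}{\alpha}}\lambda_b\geq\kappa^{\frac{2}{\alpha}}$), and the region of interest is $x>\kappa$.

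First I would apply the elementary inequality $-\log(1-u)\leq\frac{u}{1-u}$, valid for $u\in[0,1)$ (its derivative check is one line), with $u=\kappa/x$; this yields $\log\frac{x}{x-\kappa}\leq\frac{\kappa}{x-\kappa}$, so the algebraic condition $\lambda_p\geq\frac{\kappa}{\pi\nu^2(x-\kappa)}$ implies the first condition of \eqref{Eq:PBDen:Beam}. Next I would multiply this algebraic condition through by $x>0$ and split the right-hand side as $\frac{\kappa x}{\pi\nu^2(x-\kappa)}=\frac{\kappa}{\pi\nu^2}+\frac{\kappa^2}{\pi\nu^2(x-\kappa)}$, so that it becomes $z_m q\,\lambda_p\lambda_b^{\frac{\alpha}{2}}\geq\frac{\kappa}{\pi\nu^2}+\frac{\kappa^2}{\pi\nu^2\bigl(z_m q\lambda_b^{\frac{\alpha}{2}}-\kappa\bigr)}$. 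On the feasible set $z_m q\lambda_b^{\frac{\alpha}{2}}\geq\kappa$ and, for fixed $\lambda_b>0$, $z_m q\lambda_b^{\frac{\alpha}{2}}/\kappa=\frac{\mu\lambda_b^{\alpha/2}}{\omega\nu^{\beta}}\cdot\frac{z_m q}{\sigma^2}\to\infty$ as $\frac{z_m q}{\sigma^2}\to\infty$, so the correction term $\frac{\kappa^2}{\pi\nu^2(z_m q\lambda_b^{\alpha/2}-\kappa)}$ is $o(1)$. Taking the $o(1)$ in the corollary to be exactly this quantity, the set $\bigl\{z_m q\lambda_p\lambda_b^{\alpha/2}\geq\frac{\kappa}{\pi\nu^2}+o(1),\ (z_m q)^{2/\alpha}\lambda_b\geq\kappa^{2/\alpha}\bigr\}$ is contained in the inner bound of \eqref{Eq:PBDen:Beam}, which is in turn contained in $\mathcal{F}_h$ by Proposition~\ref{Prop:Feasibility:LargeBatt}; chaining the two inclusions gives the claim.

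I expect the only delicate point to be the precise meaning of the $o(1)$ near the boundary $z_m q\lambda_b^{\frac{\alpha}{2}}=\kappa$ of the second condition, where the correction term blows up rather than vanishes; the clean statement should be read in the interior, i.e.\ for $\lambda_b$ bounded away from $0$ while $\frac{z_m q}{\sigma^2}\to\infty$, which is exactly the regime in which the bound is meant to be used. Apart from this bookkeeping, the argument is only the one-line logarithm estimate and elementary algebra, so there is no substantive obstacle.
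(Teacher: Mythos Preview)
Your proposal is correct and follows the same route the paper implicitly takes: the corollary is stated in the paper without proof as a direct simplification of the first condition in \eqref{Eq:PBDen:Beam} when $z_m q/\sigma^2$ is large, and your linearization of $\log\frac{x}{x-\kappa}$ via the one-sided bound $-\log(1-u)\le u/(1-u)$ is exactly the natural way to make this rigorous while preserving the inclusion direction. Your remark about the $o(1)$ term degenerating at the boundary $z_m q\lambda_b^{\alpha/2}=\kappa$ is a fair caveat that the paper also leaves implicit.
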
 

Last, the similarity between Proposition~\ref{Prop:Feasibility} and Proposition~\ref{Prop:RXPower:Outage} results in the following corollary. 
\begin{corollary}\label{Cor:Feasiblity:IntLimNet:LargeStorage}\emph{By replacing $\sigma^2/\mu$ with $\tilde{\mu}$, the results in Proposition~\ref{Prop:Feasibility:LargeBatt} and Corollary~\ref{Cor:Feasibility:LargeBatt} also apply to the case of an interference-limited cellular network with zero noise and mobiles having infinite  energy storage. 
}
\end{corollary}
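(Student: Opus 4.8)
The plan is to treat Corollary~\ref{Cor:Feasiblity:IntLimNet:LargeStorage} as a \emph{same-argument} corollary: the noise constant $\sigma^2/\mu$ enters the derivations of Proposition~\ref{Prop:Feasibility:LargeBatt} and Corollary~\ref{Cor:Feasibility:LargeBatt} \emph{only} through the cellular feasibility condition $p\lambda_b^{\frac{\alpha}{2}}\geq\sigma^2/\mu$ supplied by Proposition~\ref{Prop:Feasibility}, and in the zero-noise, interference-limited regime this condition is superseded by one of identical form coming from Proposition~\ref{Prop:RXPower:Outage}.

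First I would recall, via Proposition~\ref{Prop:Pout:IntLim}, that when $\sigma^2=0$ the outage probability does not depend on $(p,\lambda_b)$, so Constraint~$1$ imposes no restriction on $(p,\lambda_b)$; the binding requirement is then Constraint~$2$. Proposition~\ref{Prop:RXPower:Outage} shows that under Constraint~$2$ the cellular feasibility region contains $\{(p,\lambda_b)\in\mathds{R}_+^2\mid p\lambda_b^{\frac{\alpha}{2}}\geq 1/\tilde{\mu}\}$, which is exactly the region of Proposition~\ref{Prop:Feasibility} with the threshold $\sigma^2/\mu$ replaced by $1/\tilde{\mu}$ (the constant $\tilde{\mu}$ of Proposition~\ref{Prop:RXPower:Outage}). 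This is the substitution announced in the corollary.

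Next I would re-run, verbatim, the chain of implications that gave Proposition~\ref{Prop:Feasibility:LargeBatt} and Corollary~\ref{Cor:Feasibility:LargeBatt}, but with $1/\tilde{\mu}$ in place of $\sigma^2/\mu$. That chain uses the cellular condition purely as a black box, together with Lemma~\ref{Lem:TXProb} (an active mobile can transmit continuously at power up to $\E[P]/\omega$, so one sets $p=\E[P]/\omega$) and the closed forms of $\E[P]$ in Lemma~\ref{Lem:HarPower}; neither lemma involves $\sigma^2$, since both concern only the MPT link, so both survive the passage to zero noise unchanged. Substituting $p=\E[P]/\omega$ into $p\lambda_b^{\frac{\alpha}{2}}\geq 1/\tilde{\mu}$ and inserting \eqref{Eq:HarPower:NoBeam} for isotropic MPT — or \eqref{Eq:HarPower:Beam} with its last (nonnegative) term dropped for directed MPT, which yields the inner bound — reproduces \eqref{Eq:PBDen:NoBeam}, \eqref{Eq:PBDen:Beam}, and the simplified form of Corollary~\ref{Cor:Feasibility:LargeBatt}, with every occurrence of $\sigma^2/\mu$ replaced by $1/\tilde{\mu}$. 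In particular $\kappa=\omega\sigma^2\nu^\beta/\mu$ becomes $\omega\nu^\beta/\tilde{\mu}$ and the coefficient $(1-\frac{2}{\beta})\sigma^2\omega\nu^{\beta-2}/(\pi\mu)$ is rescaled the same way; and since the ratio $z_mq/\sigma^2$ is undefined at $\sigma^2=0$, the asymptotic regime $z_mq/\sigma^2\to\infty$ in Corollary~\ref{Cor:Feasibility:LargeBatt} is to be read as $z_mq\to\infty$, which is what actually drives the $o(1)$ term (it makes $\kappa/(z_mq\lambda_b^{\frac{\alpha}{2}})\to0$, so that $\log\frac{1}{1-x}=x+o(x)$).

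The one point that needs care — essentially the only obstacle, and a mild one — is that Proposition~\ref{Prop:RXPower:Outage} provides an \emph{inner bound} on $\mathcal{F}_c$ (it rests on the CDF upper bound \eqref{Eq:RXPower:CDF}), whereas Proposition~\ref{Prop:Feasibility} is an exact characterization. Consequently, in the interference-limited case every statement of Proposition~\ref{Prop:Feasibility:LargeBatt} and Corollary~\ref{Cor:Feasibility:LargeBatt} should be read as an inner bound on $\mathcal{F}_h$, including the isotropic-MPT identity \eqref{Eq:PBDen:NoBeam} which held with equality under nonzero noise (dropping a positive term for directed MPT only compounds this, but an inner bound of an inner bound is still an inner bound). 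With this caveat, no other link in the argument changes and the corollary follows.
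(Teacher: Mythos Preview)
Your proposal is correct and matches the paper's approach: the paper simply notes that the corollary follows from ``the similarity between Proposition~\ref{Prop:Feasibility} and Proposition~\ref{Prop:RXPower:Outage},'' and you have spelled out exactly that similarity, re-running the Lemma~\ref{Lem:TXProb}/Lemma~\ref{Lem:HarPower} chain with the threshold $1/\tilde{\mu}$ in place of $\sigma^2/\mu$. Your caveat that the isotropic-MPT identity \eqref{Eq:PBDen:NoBeam} degrades to an inner bound (because Proposition~\ref{Prop:RXPower:Outage} is itself only an inner bound) is a valid refinement that the paper leaves implicit.
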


\subsection{Hybrid-Network Deployment: Mobiles with Small Energy Storage} For the current case, the deployment of the hybrid network  has to satisfy both  Constraint~$1$ and $3$. Th derivation of the resultant feasibility region requires analyzing the cumulative-distribution function (CDF) of the received raw power at a mobile that is a summation over a PPP. Though the function has no closed form \cite{Lowen:PowerLawShotNoise:1990}, it can be upper bounded as shown in the following lemma by  considering  only the nearest PB for a  mobile. 

\begin{lemma}\label{Lem:EnPout}\emph{Given  mobiles with small energy storage,  the power-outage probability is upper bounded as follows: 
\begin{enumerate}
\item for isotropic MPT, if $q\nu^{-\beta} \geq p$, 
\begin{equation}
\Pr(P < p) \leq e^{-\pi \lambda_p \l(\frac{q}{p}\r)^{\frac{2}{\beta}}};  
\end{equation}
\item for directional  MPT,  if $z_mq\nu^{-\beta} \geq p$, 
\begin{equation}
\Pr(P < p) \leq e^{-\pi \lambda_p \l(\frac{z_m q}{p}\r)^{\frac{2}{\beta}}}. 
\end{equation}
\end{enumerate}
}
\end{lemma}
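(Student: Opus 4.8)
The plan is to exploit the fact that the received raw power $P$ — given by \eqref{Eq:RXPwr:NoBeam} for isotropic MPT and by its directed-MPT counterpart in Section~\ref{Section:Model} — is a sum of nonnegative terms, so it is bounded below by the single contribution of the PB nearest to $U_0$, which we denote $T_0$. For isotropic MPT this gives $P \geq q\,[\max(|T_0 - U_0|,\nu)]^{-\beta}$; for directed MPT, discarding the nonnegative side-lobe sum, $P \geq z_m q\,[\max(|T_0 - U_0|,\nu)]^{-\beta}$. Hence $\Pr(P < p)$ is bounded above by the probability that this single dominant term falls below $p$, and it remains to evaluate that probability in terms of the nearest-neighbour distance $R := |T_0 - U_0|$.

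Next I would argue that $R$ obeys the standard nearest-point law of a homogeneous PPP, namely $\Pr(R \geq r) = e^{-\pi\lambda_p r^2}$. This is because $\Psi$ has density $\lambda_p$ and is independent of the BS and mobile processes: conditioning on the (random) location of $U_0$ and invoking stationarity of $\Psi$ shows the conditional — and therefore unconditional — law of $R$ is the one in \eqref{Eq:NearestPoints:CCDF} with $\lambda_b$ replaced by $\lambda_p$. Note that no appeal to Slivnyak's theorem for $\Psi$ is needed here, since $U_0$ is not a point of $\Psi$.

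A short deterministic manipulation then finishes the argument. For isotropic MPT, the event $\{q[\max(R,\nu)]^{-\beta} < p\}$ is equivalent to $\{\max(R,\nu) > (q/p)^{1/\beta}\}$; the hypothesis $q\nu^{-\beta}\geq p$ is precisely $\nu \leq (q/p)^{1/\beta}$, so this event forces $R > (q/p)^{1/\beta}$, whence $\Pr(P < p) \leq \Pr\big(R > (q/p)^{1/\beta}\big) = e^{-\pi\lambda_p (q/p)^{2/\beta}}$. Replacing $q$ by $z_m q$ throughout and invoking the hypothesis $z_m q\nu^{-\beta}\geq p$ yields the directed-MPT bound in the same way.

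I do not anticipate a substantive obstacle: the bound is loose by design (it throws away all but one term of a shot-noise sum), and the only point requiring care is the justification of the nearest-PB distance law given that $U_0$ is itself random, which the independence of $\Psi$ from the cellular processes and the stationarity of the PPP resolve cleanly.
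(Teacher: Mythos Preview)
Your argument is correct and is essentially the paper's own proof: both bound $P$ below by the single nearest-PB term and then use that $\Pr(P<p)$ is at most the probability that no PB lies within distance $(q/p)^{1/\beta}$ (respectively $(z_mq/p)^{1/\beta}$) of $U_0$, which is the void probability $e^{-\pi\lambda_p r_0^2}$---equivalently, the nearest-neighbour CCDF you invoke. Your observation that stationarity and independence of $\Psi$ suffice (so Slivnyak's theorem is not strictly needed here) is a valid refinement of the paper's phrasing.
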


The proof of Lemma~\ref{Lem:EnPout} is provided in Appendix~\ref{App:EnPout}. 
Combining the results in  Proposition~\ref{Prop:Feasibility}, Lemma~ \ref{Lem:EnPout} and applying  Constraint~$3$ give the main result of this section as shown below. 

\begin{proposition}\label{Prop:Feasibility:SmallBatt} \emph{Given   mobiles with small energy storage, the feasibility region $\mathcal{F}_h$ for the hybrid network deployed under Constraint~$1$ and $3$ can be inner bounded  as follows:
\begin{enumerate}
\item for isotropic MPT, 
\begin{equation}
\l\{(\lambda_b, \lambda_p)\in \mathds{R}_+^2\mid q^{\frac{2}{\beta}}\lambda_p\lambda_b^{\frac{\alpha}{\beta}} \geq \frac{\log\frac{1}{\delta}}{\pi}\l(\frac{\sigma^2}{\mu }\r)^{\frac{2}{\beta}}, q^{\frac{2}{\alpha}}\lambda_b \geq \l(\frac{\sigma^2\nu^\beta}{\mu }\r)^{\frac{2}{\alpha}}\r\}\subset \mathcal{F}_h; 
\end{equation}
\item for directional  MPT, 
\begin{equation}
\l\{(\lambda_b, \lambda_p)\in \mathds{R}_+^2\mid (z_m q)^{\frac{2}{\beta}}\lambda_p\lambda_b^{\frac{\alpha}{\beta}} \geq \frac{\log\frac{1}{\delta}}{\pi}\l(\frac{\sigma^2}{\mu  }\r)^{\frac{2}{\beta}},  (z_mq)^{\frac{2}{\alpha}}\lambda_b \geq \l(\frac{\sigma^2\nu^\beta}{\mu }\r)^{\frac{2}{\alpha}}\r\}\subset\mathcal{F}_h. 
\end{equation}
\end{enumerate}
}
\end{proposition}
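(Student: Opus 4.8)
The plan is to assemble the inner bound from three facts already in hand: the outage-constraint characterization of Proposition~\ref{Prop:Feasibility}, the nearest-PB upper bound on the power-outage probability in Lemma~\ref{Lem:EnPout}, and Constraint~$3$ itself. I treat the mobile transmission power $p$ (with the power-outage threshold $p_t$ taken equal to $p$, as in Lemma~\ref{Lem:EnPout}) as a design quantity and show that for every point of the stated region one particular choice of $p$ meets both Constraint~$1$ and Constraint~$3$. By Proposition~\ref{Prop:Feasibility}, Constraint~$1$ is equivalent to $p \geq p_{\min} := \sigma^2/(\mu\lambda_b^{\alpha/2})$. Because $\Pr(P < p)$, and a fortiori the exponential bound of Lemma~\ref{Lem:EnPout}, is nondecreasing in $p$, while the side condition $q\nu^{-\beta} \geq p$ (respectively $z_m q\nu^{-\beta} \geq p$) only gets easier as $p$ decreases, the tightest useful choice is $p = p_{\min}$, and it is enough to check Constraint~$3$ there.

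For isotropic MPT I would substitute $p = p_{\min}$ into Lemma~\ref{Lem:EnPout}(1). Its hypothesis $q\nu^{-\beta} \geq p_{\min}$ rearranges (clear $\lambda_b^{\alpha/2}$, raise to the power $2/\alpha$) to $q^{2/\alpha}\lambda_b \geq (\sigma^2\nu^\beta/\mu)^{2/\alpha}$, which is exactly the second inequality of the claimed region; granted it, the lemma gives $\Pr(P < p_{\min}) \leq \exp(-\pi\lambda_p (q/p_{\min})^{2/\beta})$. Requiring this to be at most $\delta$ (Constraint~$3$) is equivalent to $\pi\lambda_p (q/p_{\min})^{2/\beta} \geq \log(1/\delta)$, and after substituting $p_{\min} = \sigma^2/(\mu\lambda_b^{\alpha/2})$ and collecting the exponents one obtains $q^{2/\beta}\lambda_p\lambda_b^{\alpha/\beta} \geq \frac{\log(1/\delta)}{\pi}(\sigma^2/\mu)^{2/\beta}$, the first inequality. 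Hence every point of the stated set admits $p = p_{\min}$ satisfying both constraints, which proves the inclusion in $\mathcal{F}_h$. The directed-MPT case is identical after replacing the effective PB power $q$ by $z_m q$ throughout and using Lemma~\ref{Lem:EnPout}(2) in place of part~(1).

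The individual steps are routine manipulations of the exponents $2/\alpha$, $2/\beta$ and $\alpha/\beta$; the two points that deserve attention are (i) recognizing that Constraints~$1$ and~$3$ pull $p$ in opposite directions, so that testing the single value $p = p_{\min}$ is the right move and no further optimization over $p$ is needed, and (ii) remembering that the result is only an \emph{inner} bound on $\mathcal{F}_h$, since Lemma~\ref{Lem:EnPout} has already discarded the contributions of all PBs except the nearest one when bounding the harvested raw power. I do not expect any genuine obstacle beyond this bookkeeping.
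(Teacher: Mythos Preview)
Your proposal is correct and follows exactly the approach the paper takes: the paper states only that the proposition ``follows from combining the results in Proposition~\ref{Prop:Feasibility}, Lemma~\ref{Lem:EnPout} and applying Constraint~$3$,'' and you have spelled out precisely that combination, including the choice $p = p_{\min}$ and the identification of the side condition $q\nu^{-\beta}\geq p$ (resp.\ $z_mq\nu^{-\beta}\geq p$) with the second inequality of the region. Your added remarks on why $p=p_{\min}$ is the right test value and why the result is only an inner bound are accurate and go slightly beyond what the paper makes explicit.
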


It can be observed from the results that the product of the PB parameters, namely $q^{\frac{2}{\beta}}\lambda_p$,  is proportional to the logarithm of the maximum power-shortage probability $\delta$ and hence insensitive to changes on $\delta$. Moreover, the results in Proposition~\ref{Prop:Feasibility:SmallBatt} show that beamforming reduces the required $q^{\frac{2}{\beta}}\lambda_p$ by the factor of $(z_m)^{\frac{2}{\beta}}$. Like the case of mobiles having large energy storage, the required value of $q^{\frac{2}{\beta}}\lambda_p$ for the current case also decreases with increasing BS density but at a different rate. 

Last, the results in Proposition~\ref{Prop:Feasibility:SmallBatt} have their counterparts for the interference-limited network as shown in the following corollary obtained similarly as Corollary~\ref{Prop:Feasibility:SmallBatt}. 
\begin{corollary}\emph{By replacing $\sigma^2/\mu$ with $\tilde{\mu}$, the results in Proposition~\ref{Prop:Feasibility:SmallBatt} also apply to the case of an interference-limited cellular network with zero noise and mobiles having small energy storage. 
}
\end{corollary}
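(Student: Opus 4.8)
The plan is to exploit the fact that the proof of Proposition~\ref{Prop:Feasibility:SmallBatt} splits cleanly into a \emph{data-network} ingredient and an \emph{MPT} ingredient that do not interact. The only information about the uplink channel that enters that proof is the characterization of $\mathcal{F}_c$ from Proposition~\ref{Prop:Feasibility}, i.e.\ the requirement $p\lambda_b^{\frac{\alpha}{2}}\geq \sigma^2/\mu$ on the mobile transmission power $p$. Everything else is about the power-beacon side: Lemma~\ref{Lem:EnPout} bounds $\Pr(P<p)$ using only the PB process $\Psi$, the MPT path-loss exponent $\beta$, the guard radius $\nu$, and the (beam-)power $q$ or $z_mq$, while Constraint~$3$ concerns the received raw power at a mobile. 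Since MPT occupies a separate band, none of these objects depends on the channel-noise variance $\sigma^2$, so the whole derivation is reproduced verbatim once the data-network input is replaced by its zero-noise counterpart. This is exactly the small-storage analogue of Corollary~\ref{Cor:Feasiblity:IntLimNet:LargeStorage}.

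Concretely, I would proceed in three steps. First, recall from Proposition~\ref{Prop:Pout:IntLim} that in a zero-noise interference-limited cellular network $\Pout$ does not depend on $(p,\lambda_b)$, so Constraint~$1$ is vacuous; the binding data-network requirement instead comes from Constraint~$2$, and by Proposition~\ref{Prop:RXPower:Outage} it has the same product form $p\lambda_b^{\frac{\alpha}{2}}\geq 1/\tilde\mu$ with $\tilde\mu=\frac{1}{p_b}\left(\frac{2\pi}{\log\frac{1}{\eta}}\right)^{\frac{\alpha}{2}}$ — that is, the threshold $\sigma^2/\mu$ is simply overwritten by its interference-limited counterpart. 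Second, substitute this into the proof of Proposition~\ref{Prop:Feasibility:SmallBatt}: taking the minimal feasible transmission power, $p=1/(\tilde\mu\lambda_b^{\frac{\alpha}{2}})$ in place of $p=\sigma^2/(\mu\lambda_b^{\frac{\alpha}{2}})$, and inserting this $p$ both into the admissibility hypothesis $q\nu^{-\beta}\geq p$ (resp.\ $z_mq\nu^{-\beta}\geq p$) and into the bound $\Pr(P<p)\leq e^{-\pi\lambda_p(q/p)^{2/\beta}}$ (resp.\ with $z_mq$) of Lemma~\ref{Lem:EnPout}. Third, impose Constraint~$3$, $\Pr(P<p)\leq\delta$, which turns the exponential bound into $\pi\lambda_p(q/p)^{2/\beta}\geq\log\frac{1}{\delta}$; rearranging this together with the admissibility condition, and carrying along the replacement $\sigma^2/\mu\mapsto 1/\tilde\mu$ as stated in the corollary, reproduces the two inner-bound inequalities of Proposition~\ref{Prop:Feasibility:SmallBatt} for the isotropic and directed cases.

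The only point that genuinely needs checking — and the main, if modest, obstacle — is that the substitution is clean: that no step of the original argument used the nonzero-noise form of $\mathcal{F}_c$ beyond the scalar inequality $p\lambda_b^{\frac{\alpha}{2}}\geq\sigma^2/\mu$, and that Proposition~\ref{Prop:RXPower:Outage} really does deliver an inner bound of the \emph{identical} monomial shape (a lower bound on the same quantity $p\lambda_b^{\frac{\alpha}{2}}$). Both are true here, so the corollary follows by the same computation with a single constant replaced, and in writing it up I would simply point to the parallel with Corollary~\ref{Cor:Feasiblity:IntLimNet:LargeStorage} rather than repeat the algebra.
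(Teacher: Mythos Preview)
Your proposal is correct and matches the paper's approach exactly: the paper offers no separate proof, merely noting that the corollary is ``obtained similarly as'' Corollary~\ref{Cor:Feasiblity:IntLimNet:LargeStorage}, which is precisely the modular substitution argument you describe (swap the data-network input $p\lambda_b^{\alpha/2}\geq\sigma^2/\mu$ from Proposition~\ref{Prop:Feasibility} for its zero-noise analogue from Proposition~\ref{Prop:RXPower:Outage} and rerun the MPT side of Proposition~\ref{Prop:Feasibility:SmallBatt} unchanged). Your write-up is in fact more explicit than the paper's, and your observation that the replacement should strictly read $\sigma^2/\mu\mapsto 1/\tilde\mu$ (per Proposition~\ref{Prop:RXPower:Outage}) rather than $\sigma^2/\mu\mapsto\tilde\mu$ as literally stated is a correct catch of a minor notational slip in the paper.
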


\section{Numerical and Simulation Results}\label{Section:Sim}
The results in the section are obtained based on the following values of parameters unless specified otherwise. The path-loss exponents are $\alpha = 4$ and $\beta = 3$ and the outage threshold is  $\theta = 2$. Each BS cancels interference from $K=8$ nearest unintended  mobiles. Considering Constraint~$1$, the maximum outage probability for each BS is $\epsilon = 0.3$. The parameters in Constraint $2$ and $3$ are set as $p_b = 10$ dB and $\eta =\delta = 0.2$ while $p_t$ is a variable.   Last, the PB transmission power is fixed as $q = 17$ dB.\footnote{In the path-loss models, the mobile/PB transmission power is normalized by  a reference value measured at a  reference distance that is used to normalize all propagation distances  \cite{rappaport}.} 

\subsection{Access-Network Deployment}

The curve of $\epsilon$ versus $\mu$ defined in \eqref{Eq:Mu:Def}  is plotted in Fig.~\ref{Fig:PoutMu} and observed to be consistent with Remark~$2$ on Proposition~\ref{Prop:Feasibility}. In particular, $\epsilon$ approaches $1$ as $\mu$ increases and its minimum value at $\mu = 0$ gives the  outage probability ($0.06$) for an interference-limited network. In addition, it is observed that  $\mu=3.8$ corresponds  to $\epsilon = 0.3$ considered for subsequent simulations. 

\begin{figure}[t]
\begin{center}
\includegraphics[width=11cm]{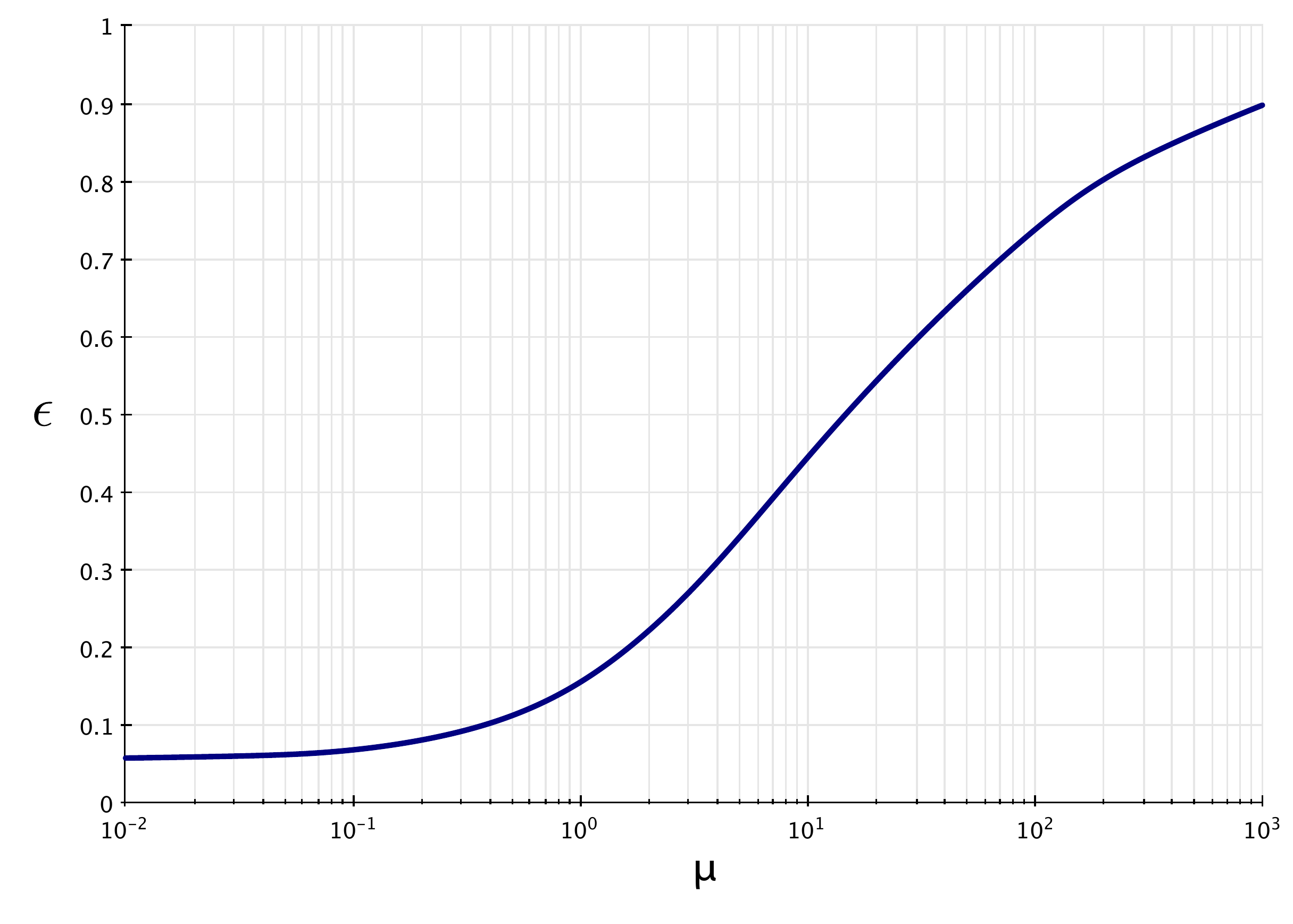}
\end{center}
\caption{$\epsilon$ versus $\mu$ defined in \eqref{Eq:Mu:Def}. }
\label{Fig:PoutMu}
\end{figure}

The feasibility regions $\mathcal{F}_c$ for the cellular network as given in Proposition~\ref{Prop:Feasibility} and \ref{Prop:Pout:IntLim} are plotted in Fig.~\ref{Fig:Feasibility:Cellular} for both the cases of non-zero and zero noise (an interference-limited network).  The boundaries of two feasibility regions are parallel as they have the same form. The minimum $p$ in dB is observed to decrease linearly with increasing $\log \lambda_b$. 

\begin{figure}[t]
\begin{center}
\includegraphics[width=11cm]{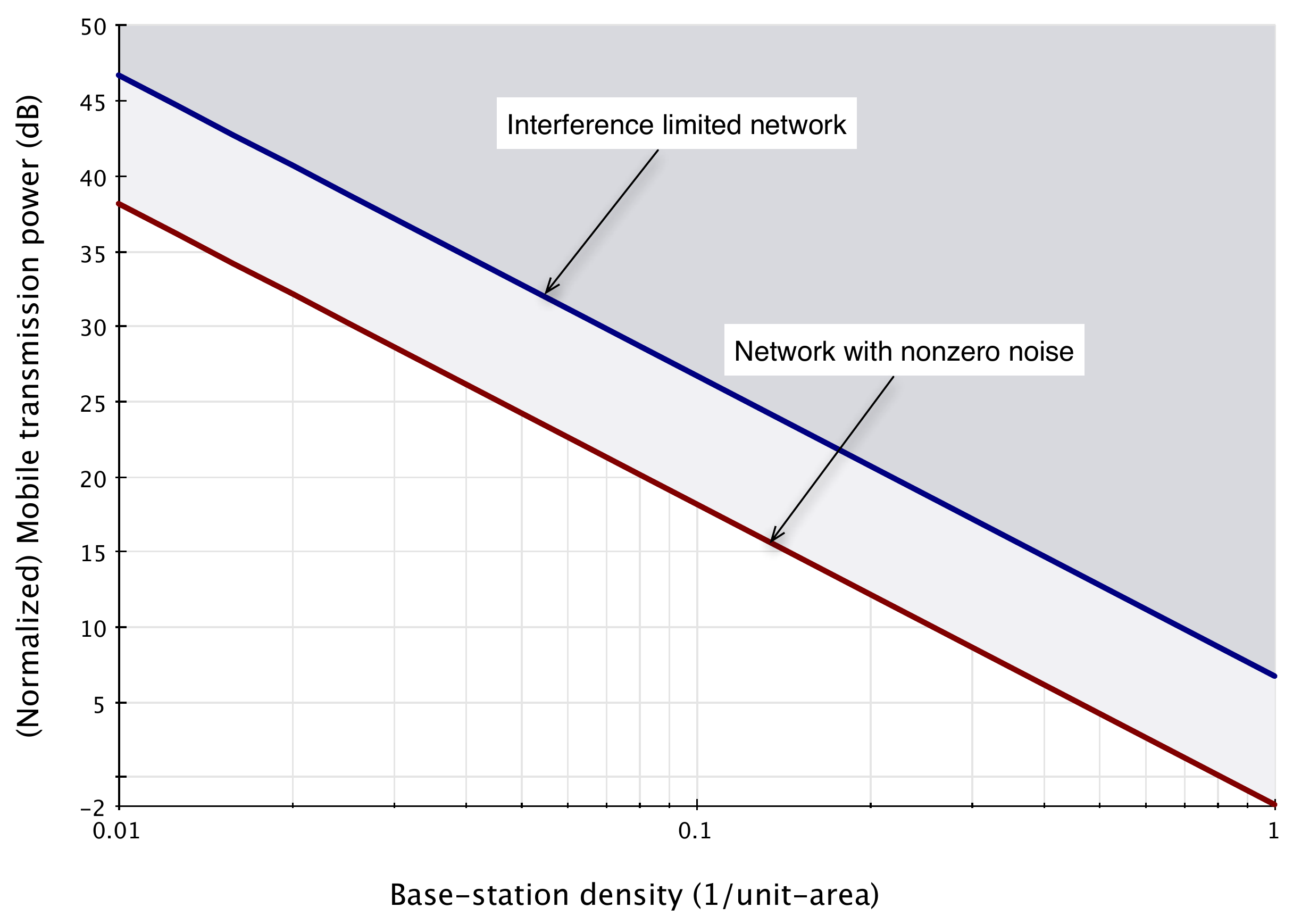}
\end{center}
\caption{Feasibility regions for the cellular network with nonzero and zero noise (an interference limited network).   }
\label{Fig:Feasibility:Cellular}
\end{figure}

\subsection{Hybrid-Network Deployment}

Fig.~\ref{Fig:TXPower} shows the curves of  mobile-transmission power $p$ versus BS density $\lambda_p$ for different combinations between large/small energy storage at mobiles and isotropic/directional MPT. The curves for the case of large energy storage are computed numerically using Lemma~\ref{Lem:HarPower} and those for the case of small storage are obtained by simulation.  On one hand, the values of $p$  for directional MPT are observed to be insensitive to changes on $\lambda_b$ if it is sufficiently large. The reason is that $p$  is dominated by the maximum raw power  transferred  to a mobile from the nearest PB that is a constant  $q\nu^{-\alpha}$. On the other hand, the values of  $p$ (in dB) for isotropic MPT do not saturate and increase approximately logarithmically with growing $\lambda_p$. The reason is that  $p$ for this  case depends on power transferred by many PBs instead of a dominant one and hence are  sensitive to changes on $\lambda_p$.

\begin{figure}[t]
\begin{center}
\includegraphics[width=11cm]{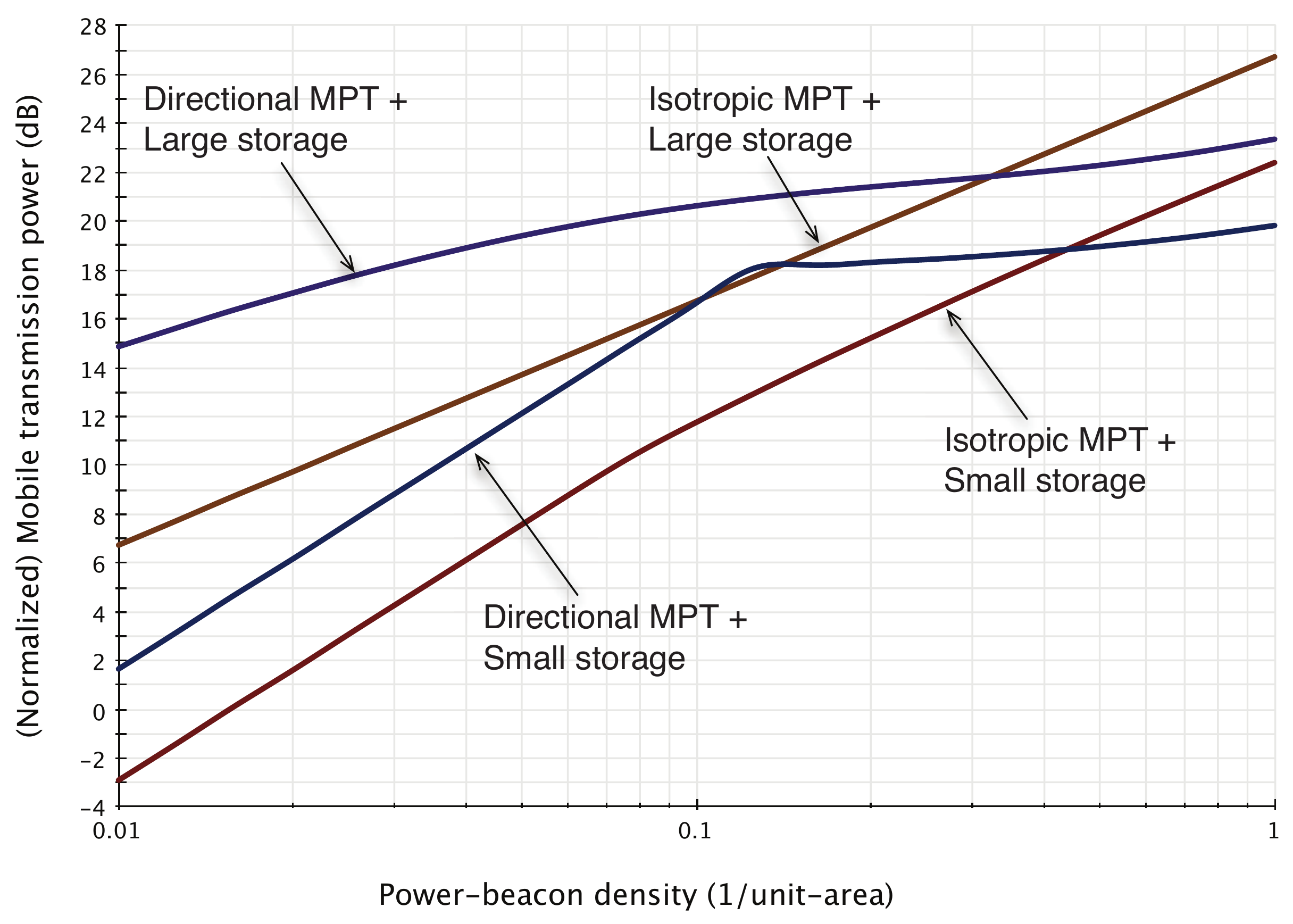}
\end{center}
\caption{Mobile-transmission power versus PB density for different combinations between large/small energy storage at mobiles and isotropic/directional MPT.}
\label{Fig:TXPower}
\end{figure}

In Fig.~\ref{Fig:Feasibility:Hybrid}, the feasibility regions for the hybrid network are plotted for different cases accounting for large/small energy storage at mobiles and isotropic/directional MPT.  The feasibility region  for the case of isotropic MPT with mobiles having large energy storage is obtained using Proposition~\ref{Prop:Feasibility} and those for other cases by simulation. Also shown  are inner bounds on the feasibility regions based on Proposition~\ref{Prop:Feasibility:LargeBatt} and \ref{Prop:Feasibility:SmallBatt} with their boundaries  plotted with  dashed lines.  The bounds are observed to be tight except for the case of isotropic MPT and mobiles with small energy storage due to the lack of a dominant PB for each mobile in terms of received raw power. By inspecting the feasibility regions for directed MPT, there exists a threshold on the BS density below which the required PB density grows extremely rapidly. The BS density  above the threshold corresponds to the regime where the power transferred to a mobile from the nearest  PB is  larger than the minimum required for satisfying the outage constraint on the cellular network. Otherwise, besides receiving power from the nearest PB,  mobiles rely on accumulating power from unintended PBs via their side-lobe transmissions so as to satisfy the outage constraint,  resulting in a high required PB density. This is the reason that isotropic MPT enlarges the feasibility region for relatively small BS densities and directed MPT is advantageous for relatively large BS densities as observed from Fig.~\ref{Fig:Feasibility:Hybrid}

\begin{figure}[t]
\begin{center}
\subfigure[Mobiles with large energy storage]{\includegraphics[width=11cm]{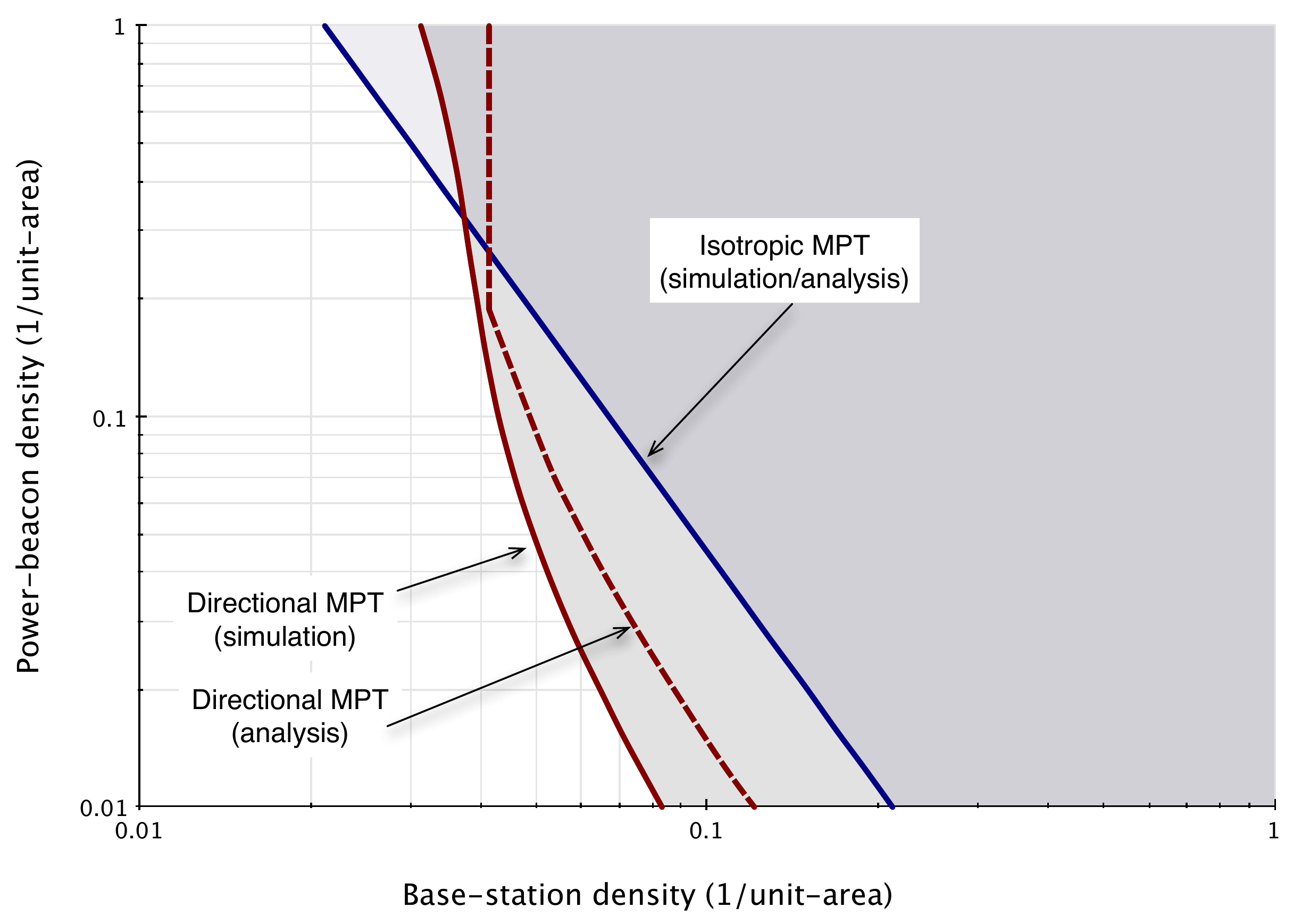}}\\
\subfigure[Mobiles with small storage]{\includegraphics[width=11cm]{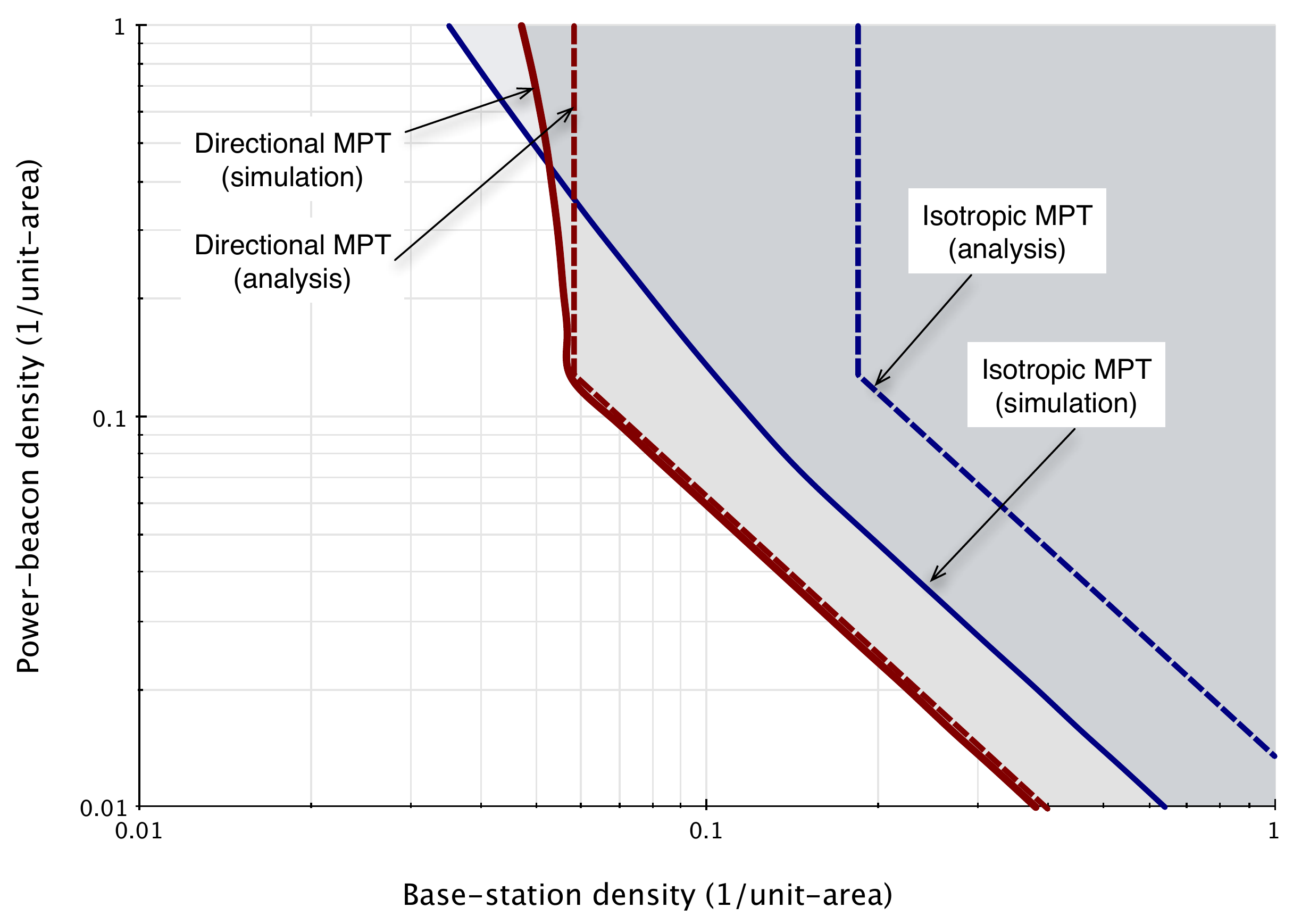}}\\
\end{center}
\caption{Feasibility regions for (a) mobiles with large storage and (b) mobiles with small storage. The boundaries of the inner bounds on the feasibility regions are plotted with dashed lines. }
\label{Fig:Feasibility:Hybrid}
\end{figure}

\section{Concluding Remarks}\label{Section:Conclusion}
The deployment of PBs for powering a cellular network  via MPT has been investigated based on a stochastic-geometry network model, resulting in simple tradeoffs between network parameters. 
First, considering the cellular network under an outage constraint, the minimum mobile-transmission power $p$ has been shown to increase  super-linearly with the decreasing base-station density $\lambda_b$. However, the absence of noise renders  outage probability independent with $(p, \lambda_b)$. Next, building on these results, the requirements on the PB deployment have been  studied by analyzing the tradeoff between $\lambda_b$ and the PB transmission power $q$ and density $\lambda_p$. Given isotropic MPT and mobiles having large capacity for storing transferred energy, the product $q\lambda_p \lambda_b^{\frac{\alpha}{2}}$ is required to be  above a given threshold so as to satisfy the outage constraint. For the case of mobiles having small energy storage, both the products  $q^{\frac{2}{\beta}}\lambda_p\lambda_b^{\frac{\alpha}{\beta}}$ and  $q^{\frac{2}{\alpha}}\lambda_b$ should be sufficient large. It has been found that compared with isotropic MPT, directed MPT by beamforming effectively increases $q$ by the array gain.

This work relies on a stochastic-geometry network model and various simplifying assumptions to derive  tradeoffs between network parameters. The results provide first-order  guidelines for realizing  MPT in cellular networks. To derive more elaborate insight into  MPT implementation, further investigations  in practical settings are necessary by considering the dependance of the directed-MPT efficiency on the array configurations, using realistic  channel models for small-cell networks, modeling the hotspot deployment of PBs using a clustered point process, characterizing the effect of mobility on MPT, and taking into account the dynamics of energy-levels at mobiles.

\renewcommand{\baselinestretch}{1.3}
\bibliographystyle{ieeetr}

\appendix 

\subsection{Proof of Lemma~\ref{Lem:Dilute}}\label{App:Dilute}
To facilitate the proof, re-denote $\{(U_Y, h_Y)\mid Y\in\Phi\}$ as $\{(U_n, h_n)\}_{n=1}^\infty$. It is almost surely that given $\Phi$ and an arbitrary active mobile $U_n$, there exists a set of coefficients $\{c_{n, Y}\mid Y\in\Phi\}$ such that $U_n = \sum_{Y\in\Phi} c_{n, Y} Y$. Consequently, a set of random coefficients $\{C_{n, Y}\mid Y\in\Phi\}$ can be defined under the constraint that $U_n = \sum_{Y\in\Phi} C_{n, Y} Y$ is uniformly distributed in the corresponding Voronoi cell. It follows that 
\begin{equation}\label{Eq:ShotNoise}
\sum\nolimits_{Y\in\Phi(\lambda_b)} h_Y |U_Y|^{-\alpha} =   \sum\nolimits_{n=1}^\infty h_n \l|\sum\nolimits_{Y\in\Phi(\lambda_b)} C_{n, Y} Y\r|^{-\alpha}. 
\end{equation}
Given $a > 0$ and $\Phi(\lambda_b)$, define a function $f: \mathds{R}^2\rightarrow\mathds{R}^2$ such that for $\mathcal{A}\subset \mathds{R}^2$, $f(\mathcal{A}) = \{\sqrt{a} x\mid x \in \mathcal{A}\}$. Intuitively, $f$ expands (or shrink) the Euclidean plane by a factor $a$ if $a > 1$ (or $ a < 1$). Let $\mu$ denote the mean measure of $\Phi(a\lambda_b)$ and define a measure $\mu^*(\mathcal{B})$ with $\mathcal{B}\subset\mathds{R}^2$ as $\mu^*(\mathcal{B}) = \mu(f^{-1}(\mathcal{B}))$. It follows that $\mu^*(\mathcal{B}) = \lambda_b |\mathcal{B}|$ where $|\mathcal{B}|$ denotes the measure (area) of $\mathcal{B}$. Since $f$ is a measurable function that does not maps distinct points to a single point, applying Mapping Theorem \cite[p18]{Kingman93:PoissonProc} gives that $f(\Phi(a\lambda_b))$ is a homogeneous PPP with density $\lambda_b$. As a result, the random variable $\sum\nolimits_{Y\in\Phi(a\lambda_b)} C_{n, Y} f(Y)$  is identically distributed as $\sum\nolimits_{Y\in\Phi(\lambda_b)} C_{n, Y} Y$. Combining this fact and  \eqref{Eq:ShotNoise} yields the desired result. \hfill $\blacksquare$

\subsection{Proof of Lemma~\ref{Lem:HarPower}}\label{Proof:HarPower}

For isotropic MPT, using \eqref{Eq:RXPwr:NoBeam} and applying Campbell's Theorem give  
\begin{align}
\E[P] &= q \lambda_p \int_{x\in\mathds{R}^2} \l[ \max(|x - U^*|, \nu)\r]^{-\alpha} dx\nn\\
& = 2q \pi \lambda_p \int_0^\infty \l[ \max(r, \nu)\r]^{-\alpha} dr\label{Eq:HarPwer:a}\\
& = 2q \pi \lambda_p \nu^{-\alpha}\int_0^\nu r dr  + 2q\pi \lambda_p \int_{\nu}^\infty r^{1-\alpha} dr \label{Eq:HarPwer:b}
\end{align}
where \eqref{Eq:HarPwer:a} results from the stationarity of $\Psi$ and using the polar-coordinate system. 
The corresponding expression of $\E[P]$ in \eqref{Eq:HarPower:NoBeam} follows from \eqref{Eq:HarPwer:b}. 

Next, consider  directed MPT and the corresponding  expression for  $\E[P]$  is derived as follows. To facilitate analysis, $P$ is rewritten from \eqref{Eq:RXPwr:Beam} as 
\begin{equation}
P =  q (z_m - z_s) \max\l(|U_0 - T_0|, \nu\r)^{-\beta} +  q z_s \sum_{T\in\Psi}\max(|U_0 - T|, \nu)^{-\beta}. \label{Eq:RXPwr:Beam}
\end{equation}
For ease of notation, define $D = |T_0 - U_0|$. Using \eqref{Eq:RXPwr:Beam} and using the result in  \eqref{Eq:HarPower:NoBeam}, 
\begin{equation}\label{Eq:HarPower:c}
\E[P] = q(z_m - z_s) \E\l[\max(D, \nu)^{-\beta}\r] +\frac{ qz_s \beta \pi \lambda_p \nu^{2-\beta}}{\beta -2 }.  
\end{equation}
 Note that  $D$ measures the shortest distance between a point in  the PPP $\Psi$ to a fixed point and have the following probability-density function  \cite{Haenggi:DistUniformRandomNetwk:2005}
\begin{equation}\label{Eq:D:PDF}
f_{D}(r) = 2\pi \lambda_p r e^{- \pi \lambda_p r^2}, \qquad r \geq 0. 
\end{equation}
Given the PDF, it is obtained that 
\begin{align}
\E\l[\max(D, \nu)^{-\beta}\r] &= 2\pi \lambda_p \int_0^\infty \max(r, \nu)^{-\beta} r e^{-\pi \lambda_p r^2} dr  \nn\\
&= 2\pi \lambda_p \int_0^\nu \nu^{-\beta} r e^{-\pi \lambda_p r^2}dr + 2\pi \lambda_p \int_\nu^\infty r^{1-\beta} e^{-\pi \lambda_p r^2} dr\nn\\
&= \nu^{-\beta} \l(1 - e^{-\pi \lambda_p \nu^2}\r) + (\pi\lambda_p)^{\frac{\beta}{2}}\gamma\l(\pi\lambda_p \nu^2, 1-\frac{\beta}{2}\r). \label{Eq:CondExp:b}
\end{align}
Substituting \eqref{Eq:CondExp:b} into \eqref{Eq:HarPower:c} gives the desired result in 
\eqref{Eq:HarPower:Beam}, completing the proof. \hfill $\blacksquare$

\subsection{Proof of Lemma~\ref{Lem:EnPout}}\label{App:EnPout}

Without loss of generality, assume $U_0$ is located at the origin $o$ since according to Slyvnyak's Theorem, 
\begin{equation}
\Pr(P < p) = \Pr(P < p\mid U_0 = o) 
\end{equation}
where $P$ is given in \eqref{Eq:RXPwr:NoBeam} and \eqref{Eq:RXPwr:Beam} for isotropic and directed MPT, respectively. The previous assumption that $Y_0$ is  at the origin is unnecessary for the proof. 

Consider isotropic MPT. Define $r_0 = \l(q/p\r)^{\frac{1}{\beta}}$. Note that any PB with a distance $r_0$ from $U_0$ can supply received raw power higher than $p$. Furthermore, $r_0 \geq \nu$ as a result of the assumption $q\nu^{-\beta} \geq p$.  Inspired by the approach in \cite{WeberAndrews:TransCapWlssAdHocNetwkSIC:2005} that studies outage probability for mobile ad hoc networks, the probability of no power shortage  is expanded as 
\begin{equation}\label{Eq:Weber}
\Pr(P \geq p) = \Pr(\Phi \cap B(0, r_0) \neq \emptyset) + \Pr(P \geq p \mid \Phi \cap B(0, r_0) = \emptyset)\Pr(\Phi \cap B(0, r_0) = \emptyset) 
\end{equation}
where $B(a, b)$ is a disk centered at $a\in \mathds{R}^2$ and with a radius $b \geq 0$. 
It follows that the power-shortage probability can be lower bounded as
\begin{align}
\Pr(P \geq p) &\geq \Pr(\Phi \cap B(0, r_0) \neq \emptyset)\nn\\
& = 1 - e^{-\pi\lambda_p r_0^2}.  \label{Eq:Weber:LB}
\end{align}
The derived result in $1)$ in the lemma statement follows. 

The result for directed MPT can be obtained similarly. This completes the proof. \hfill $\blacksquare$

\end{document}